\newcommand{\chapter}[1]{\title{#1} \maketitle}
\newcommand{\defterm}[1]{\emph{#1}}
\g@addto@macro\bfseries{\boldmath}
\titlespacing*{\paragraph}{%
  0pt}{
  {\medskipamount}}{
  1em}
\begin{document}

\author{Kent Quanrud\thanks{Dept.\ of Computer Science, Purdue University,
    West Lafayette, IN 47907. {\tt krq@purdue.edu}.}}

\title{Fast~Approximations for~Rooted~Connectivity
  in~Weighted~Directed~Graphs}

\maketitle

\begin{abstract}
  We consider approximations for computing minimum weighted cuts in
  directed graphs.  We consider both rooted and global minimum cuts,
  and both edge-cuts and vertex-cuts. For these problems we give
  randomized Monte Carlo algorithms that compute a
  $(1+\epsilon)$-approximate minimum cut in $\tilde{O}(n^2 / \epsilon^2)$
  time. These results extend and build on recent work
  \cite{cq-simple-connectivity} that obtained exact algorithms with
  similar running times in directed graphs with small integer
  capacities.
\end{abstract}

\providecommand{\ectime}{\fpair{\operatorname{EC}}} %
\providecommand{\vctime}{\fpair{\operatorname{VC}}} %
\providecommand{\ec}{\lambda} %
\providecommand{\vc}{\kappa} %
\providecommand{\inneighbors}{\fparnew{N^{-}}} %
\providecommand{\polyeps}{\eps^{\bigO{1}}} %
\providecommand{\inneighbors}{\fparnew{N^-}} %

\providecommand{\sparseG}{G_0}
\providecommand{\sparseV}{V_0}
\providecommand{\sparseE}{E_0}

\providecommand{\optS}{S^{\star}}     %
\providecommand{\weight}{\fparnew{w}} %
\providecommand{\apxvc}{\vc'}         %
\providecommand{\apxsinksize}{k'}     %

\providecommand{\optec}{\lambda^{\star}} %
\providecommand{\revG}{G_{\text{rev}}} %
\providecommand{\splitG}{G_{\text{split}}}%
\providecommand{\tout}{t^+}    %
\providecommand{\rin}{r^-}          %
\providecommand{\vin}{v^-}
\providecommand{\vout}{v^+}         %
\providecommand{\auxvin}{a_v^-}         %
\providecommand{\auxvout}{a_v^+}         %
\providecommand{\uout}{u^+}
\providecommand{\optSink}{T^{\star}} %
\providecommand{\Sink}{T} %
\providecommand{\optsink}{\optSink} %
\renewcommand{\defterm}[1]{\emph{#1}}

\clearpage

\section{Introduction}

Let $G = (V,E)$ be a directed graph with $m$ edges and $n$ vertices.
Let $G$ have positive edge weights $w: E \to \preals$.  Recall that
$G$ is \emph{strongly connected} if any vertex can reach any other in
the graph.  The (global, weighted) edge connectivity is the minimum
weight of edges that needs to be removed so that $G$ is no longer
strongly connected.  The minimum (weight) cut is the corresponding set
of edges.  Determining the edge connectivity and computing the minimum
cut are basic problems in graph algorithms.  This work develops a
faster randomized algorithm for approximating the minimum weight cut.

The algorithm for edge connectivity is developed alongside for the
following related connectivity problems also of basic interest.  Let
$r \in V$ be a fixed vertex, called the \emph{root}.  The
\emph{minimum rooted cut} (from $r$), also called the \emph{minimum
  $r$-cut}, is the minimum weight set of edges whose removal
disconnects $r$ from at least one vertex.  Global connectivity follows
from rooted connectivity by choosing any root arbitrarily, and
computing the rooted connectivity in both $G$ and the reversed graph.
Most of the algorithmic discussion in this work is focused on rooted
connectivity and global connectivity is obtained as a by-product.
Rooted connectivity has other connections in combinatorial
optimization \cite{schrijver-book,frank}; for example,
\citet{edmonds-70} showed that the $r$-rooted edge connectivity equals
the maximum number of arboresences rooted at $r$ that can be packed
into the graph.  Now suppose instead that the graph has vertex weights
$w: V \to \preals$. The \emph{(global) vertex connectivity} is the
minimum weight of vertices that needs to be removed so that $G$ either
is no longer strongly connected or consists of only a single vertex;
the minimum vertex cut is the corresponding set of vertices.  One can
also define rooted vertex connectivity analogously to rooted edge
connectivity.  This work also develops fast approximation algorithms
for rooted and global vertex connectivity.

These connectivity problems are well-studied and we first give an
overview of classical results, with particular focus on algorithms for
directed graphs, before discussing more recent developments.  There is
a long line of algorithms for directed edge connectivity
\cite{schnorr-79,timofeev-82,matula-87,mansour-schieber,gabow-95,hao-orlin}
(see also \cite{schrijver-book}), of which we highlight the most
pertinent.  For general weights, an algorithm by \citet{hao-orlin}
finds the minimum rooted cut in $\bigO{m n \log{n^2 / m}}$ time.  For
multigraphs, \citet{gabow-95} gives an
$\bigO{m \lambda \log{n^2 / m}}$ time algorithm for the minimum
$r$-cut, where $\lambda$ is the weight of the minimum rooted cut.
Directed vertex connectivity likewise has had many algorithms, and
many of these running times are parametrized by the weight of the
vertex cut
\cite{podderyugin-73,even-tarjan,cheriyan-reif,galil-80,hrg-00,gabow-06}.
Of those that are not, we highlight the randomized
$\bigO{m n \log{n}}$ time algorithm of \citet{hrg-00} that remains the
fastest algorithm in weighted and directed graphs.

Recently there has been a flurry of results for graph algorithms
several of which impact directed connectivity.  There have been many
significant developments for $(s,t)$-flow for both edge- and
vertex-capacitated directed graphs.
\cite{goldberg-rao,orlin-13,lee-sidford,madry-13,madry-16,liu-sidford-20,kls-20,brand+20,
  brand+21,glp-21,chuzhoy-khanna-19,nsy-19}.  Very recently
\cite{brand+21} obtained an $\apxO{m + n^{1.5}}$ running time for
edge-capacitated $(s,t)$-max flow, generalizing a preceding
$\apxO{m + n^{1.5}}$ algorithm for vertex capacitated $(s,t)$-max flow
\cite{brand+20}.\footnote{Here and throughout $\apxO{\cdots}$ hides
  polylogarithmic factors. The algorithms in this work use the
  $\apxO{m + n^{1.5}}$ time algorithms \cite{brand+20,brand+21} as a
  subroutine which incur large polylogarithmic factors hidden in the
  $\apxO{\cdots}$ notation. Consequently we generally do not try to
  optimize polylogarithmic factors in this article.}
Another recent development is a randomized $\apxO{m \vc^2}$ time exact
algorithm and a randomized $\apxO{m \vc / \eps}$ time
$\epsmore$-approximation for global vertex connectivity in unweighted
directed graphs \cite{forster+20}.  These algorithms are based on
local algorithms for vertex connectivity and influence the local
algorithms that appear in this work.  A very recent and independent
work of \cite{li+21} has obtained an $\apxO{m n^{1-1/12 + o(1)}}$ time
algorithm for vertex connectivity in directed and unweighted
graphs. (We have not yet had time to digest and make a proper
comparison to \cite{li+21}.)  The last recent work we discuss is a
randomized, $\apxO{n^2 U^2}$ time (exact) algorithm for rooted and
global edge connectivity in directed graphs with small integer
capacities between $1$ and $U$
\cite{cq-simple-connectivity}. \cite{cq-simple-connectivity} also
gives a $\apxO{\vc n W}$-time exact algorithm for rooted and global
vertex capacity with integer weights, where $W$ is the total weight in
the graph, and $\vc$ is the weight of the minimum vertex cut.
\cite{cq-simple-connectivity} introduces elementary ideas to sparsify
rooted connectivity problems but the crux of the argument needs the
capacities to be small.  The driving motivation of this article is to
overcome the limitations of \cite{cq-simple-connectivity} to small
integer capacities and extend the ideas to the weighted setting.

\subsection{Results.} The primary results of this work extend the
$\apxO{n^2}$ randomized running times of
\cite{cq-simple-connectivity} to the weighted setting while
allowing for approximation. These algorithms take as input an
additional parameter $\eps > 0$; the goal is to compute a cut
whose weight is at most a $\epsmore$-multiplicative factor
greater than the minimum cut.

\paragraph{Edge connectivity.} The first result is for rooted and
global edge connectivity.
\begin{restatable}{theorem}{ApxEC}
  \labeltheorem{apx-ec} Given a directed graph with polynomially
  bounded weights, and $\eps \in (0,1)$, a $\epsmore$-approximate
  minimum rooted or global edge cut can be computed with high
  probability in $\apxO{n^2 / \eps^2}$ randomized time.
\end{restatable}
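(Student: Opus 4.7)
The plan is to reduce the weighted approximation problem to the exact integer-capacity setting of \cite{cq-simple-connectivity}, losing only a $\epsmore$ factor in the cut value. First, obtain a constant-factor estimate $\bar\lambda$ of the minimum rooted edge-cut $\optec$ in $\apxO{n^2}$ time, for instance by a logarithmic-grid binary search using $\epsmore$-approximate max-flow calls from \cite{brand+21}. Once $\bar\lambda$ is known, any edge of weight exceeding $\bar\lambda$ can be capped at $\bar\lambda$ without affecting any cut of value $\bigO{\optec}$.

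Second, use $\bar\lambda$ to sparsify $G$ by cut-preserving sampling: sample each edge $e$ independently with probability $p_e = \min(1,\, c \cdot w(e) \log n / (\eps^2 \bar\lambda))$ and assign each sampled edge weight $w(e)/p_e$. After rescaling globally by $\Theta(\log n / (\eps^2 \bar\lambda))$ and rounding to the nearest integer, the sampled graph $G'$ has integer capacities bounded by $U = \apxO{1/\eps}$, while every rooted $r$-cut of value $\bigO{\optec}$ is preserved to within a $\epsmore$ factor with high probability.

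Third, apply the $\apxO{n^2 U^2}$ exact algorithm of \cite{cq-simple-connectivity} to $G'$. With $U = \apxO{1/\eps}$ this runs in $\apxO{n^2 / \eps^2}$ time, and the returned cut, lifted back through the rescaling, is a $\epsmore$-approximate minimum rooted cut of $G$. For global edge connectivity, choose any root $r$ and run the rooted algorithm on both $G$ and $\revG$, returning the smaller of the two cuts; this at most doubles the running time.

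The main obstacle is the concentration argument in the sparsification step: to union-bound over all approximately-minimum rooted cuts one needs a bound of the form $n^{\bigO{\alpha}}$ on the number of rooted cuts of value at most $\alpha \optec$. Karger's analogous bound for undirected graphs is classical; for directed rooted cuts the natural route is through Edmonds' arborescence-packing theorem (every rooted cut must cross every arborescence in the packing), but some care is needed to interface this counting with the weight-to-probability rounding, and to verify that capping at $\bar\lambda$ does not leave heavy edges that resist the sparsification. A secondary issue is tracking how the $\epsmore$-approximate max-flow errors from the rough-estimation step compound through the sampling and the final integer-capacity exact algorithm, but this is essentially a rescaling of $\eps$ by a constant factor.
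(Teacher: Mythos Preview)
Your proposal has a genuine gap at exactly the point you flag as the main obstacle. The cut-counting bound you need --- that the number of rooted $r$-cuts of value at most $\alpha\optec$ is at most $n^{\bigO{\alpha}}$ --- is not known for directed graphs, and the paper is explicit that random sampling ``is known to preserve cuts in undirected graphs \ldots\ there are no such guarantees in directed graphs.'' The arborescence-packing route you sketch does not obviously yield such a count: Edmonds' theorem tells you every $r$-cut has value at least $\optec$, but gives no direct handle on how many cuts of value $\alpha\optec$ there can be. Without this bound your union bound over cuts fails, and the sampled graph can have spurious small cuts that the exact algorithm of \cite{cq-simple-connectivity} would then return.

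The paper sidesteps this entirely by \emph{not} trying to preserve all cuts. It introduces a parameter $k$ (a guess at the number of vertices in the sink component of the minimum $r$-cut) and builds a sparsifier that is only required to preserve cuts whose sink has at most $k$ vertices. The union bound is taken over all vertex subsets $S$, with failure probability $n^{-\bigOmega{\sizeof{S}}}$ per subset --- no cut-counting needed. To keep large-sink cuts from becoming artificially small after sampling, the paper adds an auxiliary edge of weight $\bigTheta{\eps\lambda/k}$ from the root to every vertex; a sink of size $s$ then picks up additive weight $\bigTheta{\eps\lambda s/k}$, negligible for $s \leq k$ but large for $s \gg k$. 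The algorithm then branches on $k$: for small $k$ it runs a new deterministic local-flow routine that exploits those auxiliary edges to find short augmenting paths, in roughly $nk^2$ time; for large $k$ it samples a vertex from the sink and calls $(s,t)$-flow on the sparsified graph, in roughly $n^2 + n^{2.5}/k$ time. Balancing at $k \approx \sqrt{n}$ gives the $\apxO{n^2/\eps^2}$ bound. None of this is a reduction to the integer-capacity algorithm of \cite{cq-simple-connectivity} as a black box.

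A secondary arithmetic issue: even granting the concentration step, your claim $U = \apxO{1/\eps}$ does not follow from the sampling you describe. Edges with $p_e = 1$ retain weight up to $\bar\lambda$, which after your rescaling by $\bigTheta{\log n /(\eps^2\bar\lambda)}$ becomes $\bigTheta{\log n/\eps^2}$, so $U = \apxO{1/\eps^2}$ and the black-box call would give $\apxO{n^2/\eps^4}$ rather than $\apxO{n^2/\eps^2}$.
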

It is of theoretical interest to obtain $o(mn)$ running times due to
the longstanding $\bigO{m n \log{n^2 / m}}$ running time of
\cite{hao-orlin} and the connection to flow decompositions.  Observe
that the above running time for edge connectivity is $o(mn)$ except
for the case where $m \leq n^{1+ o(1)}$.  One can modify the algorithm
in \reftheorem{apx-ec} (leveraging, in particular, the recent
$m^{1.5 - 1/28}$ time algorithm for $(s,t)$-flow \cite{glp-21}) to
establish a $o(mn)$ running time, as follows.
\begin{restatable}{corollary}{FlowBarrierApxEC}
  \labelcorollary{flow-barrier-apx-ec} There exists a constant
  $c > 0$ such that, for all fixed $\eps > 0$, an
  $\epsmore$-approximate minimum weight rooted or global cut in
  a directed graph with polynomially bounded edge weights can be
  computed with high probability in
  $\bigO{(mn)^{1-c} / \eps^{2}}$ randomized time.
\end{restatable}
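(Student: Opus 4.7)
The plan is a density-based case analysis, parameterized by a small constant $\delta > 0$ chosen at the end. For $m \ge n^{1+\delta}$ we apply \reftheorem{apx-ec} unchanged; for $m < n^{1+\delta}$ we modify \reftheorem{apx-ec} by swapping in a faster $(s,t)$-max-flow subroutine.

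In the dense regime $m \ge n^{1+\delta}$, the running time $\apxO{n^2/\eps^2}$ from \reftheorem{apx-ec} already satisfies $n^2 \le (mn)/n^{\delta}$. Because the edge weights are polynomially bounded, $\log n = \Theta(\log(mn))$, and so for fixed $\eps$ this bound is $\apxO{(mn)^{1-c_1}}$ for a constant $c_1 = c_1(\delta) > 0$ (depending on $\delta$ and the polynomial weight bound).

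In the sparse regime $m < n^{1+\delta}$, we replace the $(s,t)$-max-flow subroutine called by \reftheorem{apx-ec}. The excerpt's footnote records that \reftheorem{apx-ec} invokes the $\apxO{m + n^{1.5}}$ max-flow algorithm of \cite{brand+20, brand+21}; we substitute the $\apxO{m^{1.5 - 1/28}}$ algorithm of \cite{glp-21}. For $m \le n^{1+\delta}$ each call then costs $\apxO{n^{(1+\delta)(1.5 - 1/28)}}$ instead of $\apxO{n^{1.5}}$, a per-call saving of roughly $n^{1/28 - \bigO{\delta}}$. Assuming the $\apxO{n^2/\eps^2}$ cost of \reftheorem{apx-ec} is dominated in the sparse regime by its max-flow invocations (a typical feature of such algorithms, since non-flow work is driven by $m$ and combinatorial overhead that is $o(n^2)$ when $m$ is nearly linear), propagating the per-call savings gives total time $\apxO{(mn)^{1-c_2}/\polyeps}$ for some $c_2 = c_2(\delta) > 0$. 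Setting $c = \min(c_1, c_2)/2$ with $\delta$ small enough to keep both positive gives the corollary.

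The main obstacle is the sparse case: one must inspect the algorithm of \reftheorem{apx-ec}, count the max-flow invocations together with the sizes of the subgraphs on which they are made, and confirm that after substitution the flow cost indeed dominates the non-flow overhead and drops below $mn$. The $1/28$ exponent gain per call is narrow, so $\delta$ must be chosen conservatively and the accounting of flow work done carefully to ensure that $c > 0$ survives after aggregating over all flow invocations.
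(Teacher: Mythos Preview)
Your high-level density split is the same as the paper's, and the dense case is correct. The sparse case, however, has a genuine gap: your central assumption---that the $\apxO{n^2/\eps^2}$ cost of \reftheorem{apx-ec} is dominated by its max-flow calls---is false. Recall how \reftheorem{apx-ec} actually works: for each guessed $k$, it runs the faster of \reflemma{apx-rec-small-sink} (cost $\apxO{nk^2/\eps^4}$, a local-cut routine with \emph{no} max-flow) and \reflemma{apx-rec-big-sink} (cost $\apxO{n^2/\eps^2 + n^{2.5}/k}$, which does call max-flow). The balance point is $k \approx \sqrt{n}$, at which \emph{both} branches cost $\Theta(n^2)$. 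So even if you made the flow branch free, the local-cut branch alone is still $\Theta(n^2)$ at that $k$, and you have gained nothing.

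The missing idea is that substituting a faster flow algorithm changes the balance: the large-$k$ branch becomes cheaper, so you should lower the threshold $k$, which in turn shrinks the $nk^2$ cost of the small-$k$ branch. Concretely, the paper (\reflemma{sparse-apx-rec}) re-derives the large-$k$ branch without sparsification and with the $\apxO{m^{3/2-\delta}}$ flow algorithm, getting cost $\apxO{(n/k)\, m^{3/2-\delta}}$; balancing this against $\apxO{nk^2/\eps^4}$ gives $k = \eps^{4/3} m^{1/2 - \delta/3}$ and overall time $\apxO{n\, m^{1 - 2\delta/3}/\eps^{4/3}}$, which is $o(mn)$. A secondary issue you glossed over: in \reftheorem{apx-ec} the flow is called on the \emph{sparsified} graph with $\apxO{nk/\eps^2}$ edges, not on the input; at $k \approx \sqrt{n}$ that is $\approx n^{3/2}$ edges, so plugging in $m^{3/2-\delta}$ there would actually be \emph{worse} than $\apxO{n^{1.5}}$. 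The paper explicitly drops sparsification in this branch for precisely this reason.
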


\paragraph{Vertex connectivity.}
The other main result is for rooted vertex connectivity. Here,
$\outdegree{v}$ denotes the unweighted out-degree of a vertex $v$.
\begin{restatable}{theorem}{ApxRVC}
  \labeltheorem{apx-rvc} Let $\eps \in (0,1)$, let $G = (V,E)$ be a
  directed graph with polynomially bounded vertex weights, and let
  $r \in V$ be a fixed root. A $\epsmore$-approximate minimum vertex
  $r$-cut can be computed with high probability in
  $\apxO{m + n (n - \outdegree{r}) / \eps^2}$ randomized time.
\end{restatable}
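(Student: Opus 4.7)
The plan is to extend the integer-capacity approach of \cite{cq-simple-connectivity} to general vertex weights by combining a vertex-cut-preserving sparsifier (which reduces weights to small integers) with a per-sink exact rooted vertex cut subroutine, exploiting the observation that only non-out-neighbors of $r$ are meaningful candidate sinks. The structure mirrors \reftheorem{apx-ec}, modified so that sampling respects vertex capacities and enumeration is restricted to $V \setminus (\{r\} \cup N^{+}(r))$.

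\textbf{Phase 1 (sparsification and weight reduction).} Using a few calls to the vertex-capacitated max-flow algorithm of \cite{brand+20}, first compute a constant-factor estimate $\apxvc$ of the minimum rooted vertex cut. With $\apxvc$ in hand, apply cut-preserving sampling in the standard split-graph construction (each vertex $v$ is replaced by an in-copy $\vin$ and an out-copy $\vout$ joined by a directed ``vertex edge'' of capacity $w(v)$; incoming edges attach to $\vin$ and outgoing edges leave from $\vout$). After sampling and rescaling, obtain a graph $\sparseG$ on $V$ with (i) integer vertex weights bounded by $\apxO{1/\eps^2}$, (ii) minimum rooted vertex cut of value $\apxO{1/\eps^2}$, and (iii) every vertex cut preserved within a $(1 \pm \eps)$ factor with high probability. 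This phase runs in $\apxO{m}$ time.

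\textbf{Phase 2 (per-sink rooted vertex cut).} Any vertex cut separating $r$ from $t \in N^{+}(r)$ must contain $t$ itself, so the minimum such cut equals $w(t)$, and $\min_{t \in N^{+}(r)} w(t)$ can be read off in $\bigO{m}$ time. The genuine candidate sinks therefore number $n - \outdegree{r} - 1$. For each such sink $t$, run an exact minimum vertex $(r,t)$-cut subroutine on $\sparseG$ in $\apxO{n/\eps^2}$ time, adapted from \cite{cq-simple-connectivity}. Returning the minimum across all sinks (trivial and non-trivial) yields the $\epsmore$-approximation in total time $\apxO{m + n(n - \outdegree{r})/\eps^2}$.

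\textbf{Main obstacle.} The principal difficulty lies in the per-sink bound of $\apxO{n/\eps^2}$. The global exact algorithm from \cite{cq-simple-connectivity} runs in $\apxO{\vc n W}$, which with $\vc = \apxO{1/\eps^2}$ and total vertex weight $W = \apxO{n/\eps^2}$ becomes $\apxO{n^2/\eps^4}$ — too slow even for a single sink. Instead, one needs a local / Gabow-style augmenting-path procedure (in the spirit of \cite{forster+20}) adapted to vertex cuts in the split graph, whose per-sink cost scales with the cut value $\apxO{1/\eps^2}$ times an effective local size $\apxO{n}$ rather than the total mass $W$. Designing a vertex-capacity sparsifier that simultaneously (a) preserves all vertex cuts within $(1 \pm \eps)$ and (b) supports such a fast per-sink local augmenting-path computation — with the sampling carefully respecting the split structure so that the ``vertex edges'' remain integer-weighted and the cut correspondence is not broken — is the most delicate step.
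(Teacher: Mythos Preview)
Your proposal has two genuine gaps that prevent it from going through. First, Phase~1 asserts a sparsifier on a directed graph that preserves \emph{every} vertex $r$-cut within a $(1\pm\eps)$ factor while reducing all weights to $\apxO{1/\eps^2}$. No such object is known for directed graphs; the paper explicitly notes that random sampling does not preserve directed cuts in general. The paper's sparsification (\reflemma{apx-rvc-sparsification}) is fundamentally different: it is parameterized by a \emph{guessed sink size} $k$, adds auxiliary $r$-to-$v$ vertices of weight $\bigTheta{\eps\vc/k}$ so that large sink components are forced to have large in-cuts, and then only guarantees preservation for cuts whose sink has at most $k$ vertices. Without the $k$-parameterization and the auxiliary vertices, the union bound over all $2^{n}$ potential sinks fails.

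Second, even granting a sparsifier, Phase~2 claims a per-sink cost of $\apxO{n/\eps^2}$ but never actually produces one; your ``main obstacle'' paragraph correctly identifies this as unresolved. The paper does not try to achieve a uniform per-sink bound at all. Instead it runs two $k$-dependent algorithms and balances them: for small $k$ it uses a deterministic local augmenting-path routine (\reflemma{local-apx-rvc}) that exploits the auxiliary edges to terminate each search after $\apxO{k^2/\eps^2}$ edges, combined with sampling $\apxO{(n-\outdegree{r})/k}$ candidate sinks rather than enumerating all of them, giving $\apxO{(n-\outdegree{r})k^2/\eps^4}$; for large $k$ it samples $\apxO{(n-\outdegree{r})/k}$ sinks and runs full $(s,t)$-flow in the $k$-sparsified graph, giving $\apxO{(n-\outdegree{r})(n/\eps^2+n^{1.5}/k)}$. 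Setting $k=\eps\sqrt{n}$ balances these. Your approach of enumerating all $n-\outdegree{r}$ sinks with a single sparsification misses both the $k$-parameterized sparsifier and the sink-sampling speedup that together drive the running time.
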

An argument by \cite{hrg-00} (with some modifications) implies that
the above running time for rooted vertex connectivity, combined with
randomly sampling roots by weight, leads to the following running time
for global vertex connectivity.
\begin{restatable}{corollary}{ApxVC}
  \labelcorollary{apx-vc} For all $\eps \in (0,1)$, a
  $\epsmore$-approximate minimum weight global vertex cut in a
  directed graph with polynomially bounded vertex weights can be
  computed with high probability in $\apxO{n^2 / \eps^2}$ expected
  time.
\end{restatable}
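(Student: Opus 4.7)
The plan is to reduce global vertex connectivity to $\bigO{\log n}$ invocations of \reftheorem{apx-rvc}, via weight-based root sampling in the style of \citet{hrg-00}.

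Fix any optimal global vertex cut with separator $\optS$ of weight $\vc$, so that $V \setminus \optS$ partitions as $A \sqcup B$ with no path from $A$ to $B$ in $G - \optS$. For every $r \in A$, the set $\optS$ disconnects $r$ from each $v \in B$, so the minimum rooted vertex $r$-cut in $G$ has weight at most $\vc$; the reverse inequality holds since any rooted $r$-cut is also a global cut. Thus the minimum rooted $r$-cut in $G$ equals $\vc$ for every $r \in A$, and symmetrically the minimum rooted $r$-cut in the edge-reversed graph $\revG$ equals $\vc$ for every $r \in B$.

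The algorithm draws $k = \Theta(\log n)$ independent roots $r_1, \dots, r_k$, each chosen with probability proportional to its weight $w(\cdot)$, and for each $r_i$ invokes \reftheorem{apx-rvc} on both $G$ and $\revG$, returning the lightest cut produced. For correctness, by exchanging the roles of $G$ and $\revG$ we may assume $w(A) \geq w(B)$, and hence $w(A) \geq (w(V) - \vc)/2$. In the principal regime $\vc \leq w(V)/2$, a single weighted sample lies in $A$ with probability at least $1/4$, so with $\Theta(\log n)$ independent samples some $r_i$ lies in $A$ with high probability, in which case the minimum returned equals $\vc$.

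For the running time, each rooted call from $r_i$ takes $\apxO{m + n(n - \outdegree{r_i})/\eps^2}$ by \reftheorem{apx-rvc}; using $n - \outdegree{r_i} \leq n$ uniformly, each call is $\apxO{n^2 / \eps^2}$, so over $O(\log n)$ samples the total is $\apxO{n^2 / \eps^2}$.

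The main obstacle I expect is the unbalanced regime $\vc > w(V)/2$, where weight-biased sampling may fail to hit $A$ or $B$ reliably; this is presumably where the ``some modifications'' alluded to in the text enter. One can address it by additionally sampling a few roots uniformly by cardinality (which succeeds whenever $|\optS|$ is a small fraction of $n$), and handling the residual case $|\optS| \geq n/2$ via a direct structural argument, or by a randomized recovery step whose rare cost is absorbed into the expectation clause in the conclusion. Fitting the unbalanced regime into an $\apxO{n^2 / \eps^2}$ \emph{expected} bound is the main technical point to tie up.
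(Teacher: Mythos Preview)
Your reduction is sound in the balanced regime, and you correctly isolate the unbalanced case $\vc > W/2$ as the obstruction. But the fixes you sketch do not close the gap: if simultaneously $\weight{\optS} > W/2$ and $\sizeof{\optS} > n/2$ (nothing rules this out), then neither weight-biased nor cardinality-biased sampling with $\bigO{\log n}$ draws lands in $A \cup B$ with high probability, and the ``residual case'' you defer is exactly the hard one.

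The paper takes a genuinely different route that handles every value of $\vc$ uniformly and is the actual content of the \citet{hrg-00} argument. Instead of $\bigO{\log n}$ roots, it draws $L = \bigO{W \log n / (W - \vc)}$ roots by weight; this guarantees some sample avoids $\optS$ with high probability even when $\vc$ is close to $W$. The cost is controlled not by the trivial bound $n - \outdegree{r} \leq n$ you use, but by the identity
\[
  \evof{n - \outdegree{r}} \;=\; n - \frac{1}{W}\sum_{v} \weight{v}\,\outdegree{v}
  \;=\; n - \frac{1}{W}\sum_{v} \weight{\inneighbors{v}}
  \;\leq\; n\parof{1 - \frac{\vc}{W}},
\]
since each in-neighborhood $\inneighbors{v}$ is itself a global vertex cut and hence has weight at least $\vc$. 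The number of samples and the expected per-sample cost are thus inversely proportional, and their product collapses to $\apxO{nT} = \apxO{n^2/\eps^2}$ regardless of $\vc$. This coupling is the point you are missing, and it is why the bound is stated in \emph{expectation}.

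Finally, the ``modifications'' the text alludes to are not about the unbalanced regime. They address the additive $\apxO{m}$ term in \reftheorem{apx-rvc}: over $L$ roots this would naively become $\apxO{mL}$, which is too large when $L \gg 1$. The paper fixes this by running the root-independent steps of the sparsification in \reflemma{apx-rvc-sparsification} once, deferring only the root-specific auxiliary vertices to a per-root $\bigO{n}$ patch that is dominated by $(n - \outdegree{r})T$.
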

In the same way as for edge connectivity above, \reftheorem{apx-rvc}
and \refcorollary{apx-vc} leads to $o(mn)$ time approximation
algorithms for rooted and global vertex connectivity, as follows.
\begin{restatable}{corollary}{FlowBarrierApxVC}
  \labelcorollary{flow-barrier-apx-vc} There exists a constant
  $c > 0$ such that, for all fixed $\eps > 0$, an
  $\epsmore$-approximate minimum weight rooted or global cut in
  a directed graph with polynomially bounded edge weights can be
  computed with high probability in
  $\bigO{(mn)^{1-c} / \eps^{2}}$ randomized time.
\end{restatable}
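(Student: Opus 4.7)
The plan parallels the proof of \refcorollary{flow-barrier-apx-ec} and hinges on splitting by edge density. Fix a small constant $c > 0$ to be chosen at the end, together with a threshold $\alpha = (1+c)/(1-c)$, and treat the cases $m \geq n^{\alpha}$ and $m < n^{\alpha}$ separately.

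\textbf{Dense regime.} When $m \geq n^{\alpha}$, I would invoke \refcorollary{apx-vc} for the global problem and \reftheorem{apx-rvc} for the rooted one; both give $\apxO{n^{2}/\eps^{2}}$ time (the additive $\apxO{m}$ of \reftheorem{apx-rvc} is subsumed since $m \leq n^{2}$). The choice of $\alpha$ guarantees $n^{2} \leq (mn)^{1-c}$, so this already meets the claimed bound.

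\textbf{Sparse regime.} When $m < n^{\alpha}$, I would replace the $\apxO{n^{2}}$ algorithm by one based on vertex-capacitated $(s,t)$-max flow. The standard vertex-split reduction reduces the min vertex $r$-cut problem to max flow on a graph with $\Theta(n)$ vertices and $\Theta(m+n)$ edges, on which each call runs in $\apxO{m + n^{3/2}}$ time by \cite{brand+20} or $m^{3/2 - 1/28 + o(1)}$ time by \cite{glp-21}. For $m < n^{\alpha}$ with $\alpha$ just above $1$, this is $o(n^{2})$ per call. A Gabow-style bound of the min vertex cut by the weight of the lightest neighborhood, combined with the sampling argument underlying \refcorollary{apx-vc}, lets one localize the witness cut with only $n^{o(1)}$ max-flow calls, keeping the total cost below $(mn)^{1-c}$ for $c$ sufficiently small.

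The main obstacle will be the sparse regime. The vertex-split graph inflates by a $\Theta(n)$ additive vertex overhead, and a single vertex max flow yields only an $(s,t)$-cut rather than a global certificate, so one must argue that a polylogarithmic number of well-chosen flows suffice to recover an approximate minimum vertex $r$-cut from the residual structure, in analogy with the reduction used by \cite{cq-simple-connectivity} in the integer-capacity setting. Once this is in place, balancing $\alpha$ against $c$ so that both regimes simultaneously fit under $(mn)^{1-c}/\eps^{2}$ is a routine calculation, and a union bound preserves the high-probability guarantees inherited from \reftheorem{apx-rvc} and \refcorollary{apx-vc}.
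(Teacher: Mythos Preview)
Your dense-regime case is fine and matches the paper. The gap is in the sparse regime: the assertion that ``only $n^{o(1)}$ max-flow calls'' suffice is unsupported, and no residual-structure argument of the kind you gesture at is established for weighted directed vertex cuts. For a fixed root $r$, hitting the sink component of the minimum $r$-cut by random sampling requires $\Omega((n-\outdegree{r})/k)$ tries when that component has $k$ vertices, so if $k = O(1)$ you are looking at $\Omega(n)$ max-flow calls and hence $\apxO{n\, m^{3/2-\delta}}$ total --- which is not $o(mn)$ when $m = n^{1+o(1)}$. The vague appeal to a ``Gabow-style bound'' does not help here; Gabow's algorithm is for unweighted edge connectivity and does not supply the missing certificate.

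The paper's route (\reflemma{sparse-apx-vc}) does not try to cut the number of flow calls to polylogarithmic. Instead it retains the two-regime structure of \reftheorem{apx-rvc} \emph{in the sink-component size $k$}: for small $k$ it invokes the local-cut data structure of \reflemma{apx-rvc-small-sink}, which runs in $\apxO{(n-\outdegree{r}) k^{2}/\eps^{4}}$ time independent of $m$; for large $k$ it samples $\apxO{(n-\outdegree{r})/k}$ sinks and runs the $\apxO{m^{3/2-\delta}}$ flow of \cite{glp-21} on each, without first sparsifying. Balancing at $k = \eps^{4/3} m^{1/2-\delta/3}$ gives $\apxO{(n-\outdegree{r})\, m^{1-2\delta/3}/\eps^{4/3}}$, and then the density split against \reftheorem{apx-rvc} and \refcorollary{apx-vc} (exactly as you outlined) finishes the argument. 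The ingredient your plan is missing is precisely this local-cut primitive for small sink components; without it the sparse regime does not close.
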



\subsection{Key ideas.}

The high level approach is inspired by previous work in
\cite{cq-simple-connectivity}, which was limited to small integer
capacities.  Let us focus on edge connectivity as the ideas for vertex
connectivity are similar.  An important idea that emerges from
\cite{cq-simple-connectivity} is that there are useful tradeoffs based
on the number of vertices in the sink component.  Let $k$ denote the
number of vertices in the sink component of the minimum $r$-cut, and
for simplicity, suppose $k$ is known.  \cite{cq-simple-connectivity}
observed that if $k$ is small, and the graph has small capacities,
then the graph can be sparsified by contracting vertices with
in-degree greater than $\bigO{k}$ into the root.  Meanwhile, if $k$ is
large, then it is easier to sample a vertex from the sink component
and then apply $(s,t)$-flow.  Balancing these tradeoffs leads to an
$\apxO{n^2}$ time exact algorithm for rooted edge connectivity (for
small integer capacities).  Note that the sampling approach for large
$k$ extends to the weighted setting. However the sparsification
argument requires the assumption of small capacities.  The high-level
goal of this work is to extend the ideas from
\cite{cq-simple-connectivity} to the weighted setting.

The first step is to take advantage of the approximation error and
discretize the edge weights (and also reduce the number of edges) by
random sampling.  While random sampling is known to preserve cuts in
undirected graphs \cite{karger, benczur-karger, spielman-srivastava},
there are no such guarantees in directed graphs.  Here we continue the
theme of balancing tradeoffs in the size of the sink component, $k$.
Rather than sampling as to preserve all cuts, we only sample to
preserve the in-cuts of vertex sets of size less than or equal to $k$
of the minimum rooted cut (which can be guessed).  This requires
paying some overhead in proportion to the target component size.  To
address large vertex sets for which the random sampling might
drastically understimate the in-cut, we add appropriately weighted
auxiliary edges from the root to every vertex. The auxiliary edges
make it impossible for large vertex sets to induce the minimum weight
rooted cuts, but also have limited impact on small vertex sets. (This
approach is inspired in part by the sparsification ideas in
\cite{cq-simple-connectivity} and in another part by the pessimistic
estimator in \cite{li-21}.)  Thereafter one can contract high-degree
nodes in the root similar to \cite{cq-simple-connectivity}.  The end
result is a sparser graph where the sparsity depends on $k$.
Moreover, up to scaling, the sparsified graph has integer capacities,
and the weight of the minimum $r$-cut becomes proportional to $k$.

In this sparsified graph, depending on the (guessed) size of the sink
component, the algorithm pursues one of two options.  If $k$ is
smaller than (roughly) $\sqrt{n}$, the graph is very sparse and the
size of the cut is small, and we run an algorithm based on a new
deterministic local cut algorithm that takes advantage of the small
value $k$. If $k$ is larger, then we try to sample a vertex from the
sink component and run $(s,t)$-flow.  In both cases we work in the
sparsified graph.  Balancing terms between the running times of these
two approaches leads to the claimed $\apxO{n^2 / \eps^2}$ running
time.

We highlight that in previous work for small capacities, in the regime
where $k$ is small, \cite{cq-simple-connectivity} was able to use
Gabow's algorithm (on a sparsified graph) to find the minimum rooted
cut. Here, in the presence of capacities (even after sparsification),
Gabow's algorithm has a larger polynomial dependency on $k$ than
desired and one needs new ideas.  The local cut approach developed
here is inspired by the recent randomized algorithms of
\cite{forster+20,cq-simple-connectivity}. Compared to these previous
works, the new local cut algorithm has a better dependency on $k$ and
$\eps$ and is also deterministic.  The local cut algorithm routes flow
in the reversed graph from a fixed vertex $t$ to the root, with
localized running times depending only on $k$.  It takes advantage of
auxiliary edges from the root added in the sparsification step to find
short augmenting paths.  The improved running time comes from a
refined analysis based on how many of these auxiliary edges have been
saturated.  The fact that we are always routing flow to the root also
removes the guess work from \cite{forster+20,cq-simple-connectivity}
and makes the algorithm deterministic.\footnote{That said, the overall
  algorithm for rooted edge connectivity is still randomized.}

\paragraph{Organization.}
The remainder of this article is divided into two
sections. \refsection{apx-ec} considers edge connectivity, and proves
\reftheorem{apx-ec} and
\refcorollary{flow-barrier-apx-ec}. \refsection{apx-vc} considers
vertex connectivity, and proves \reftheorem{apx-rvc},
\refcorollary{apx-vc}, and \refcorollary{flow-barrier-apx-vc}.


\section{Rooted edge connectivity}

\labelsection{apx-ec}

In this section, we design and analyze an $\apxO{n^2 / \eps^2}$-time
approximation algorithm for the minimum weight rooted edge cut. We
present the algorithm as three main steps. Each of the steps are
parameterized by values $\lambda > 0$ and $k \in \naturalnumbers$
that, in principle, are meant to be constant factor estimates for the
weight of the minimum rooted cut and the number of vertices in the
sink component of the minimum rooted cut, respectively. The first step
is a sparsification result that (assuming $\lambda$ and $k$ are
accurate) reduces the problem to a rooted graph with roughly $nk$
edges and rooted connectivity roughly $k$ in addition to a few other
helpful properties. This sparsification procedure is used by both of
the remaining two steps.  The second step, preferable for small $k$,
approximates the minimum rooted cut in roughly $nk^2$ time, and is
based on a new deterministic local cut algorithm that makes essential
use of some of the specific properties of the sparsification
lemma. The third step, preferable for large $k$, approximates the
minimum rooted cut in roughly $n^2 + n^{2.5}/k$ time, via random
sampling and $(s,t)$-flow in the sparsified graph. Balancing terms
leads to the claimed running time.

\paragraph{Sparsification.}
Our first lemma sparsifies the graph while preserving the minimum
rooted cut. The algorithm is parameterized by a target number of
vertices in the sink component and the sparsity of the output graph
depends on this input parameter.
\begin{lemma}
  \labellemma{apx-rec-sparsification} Let $G = (V,E)$ be a directed
  graph with positive edge weights.  Let $r \in V$ be a fixed root
  vertex.  Let $\eps \in (0,1)$, $\ec > 0$, and
  $k \in \naturalnumbers$ be given parameters.  In randomized linear
  time, one can compute a randomized directed and edge-weighted graph
  $\sparseG = (\sparseV, \sparseE)$, where $\sparseV \subseteq V$ and
  $r \in \sparseV$, and a scaling factor $\tau > 0$, with the
  following properties.
  \begin{properties}
  \item \label{recs-weights} $\sparseG$ has integer edge weights
    between $1$ and $\bigO{k \log{n} / \eps^2}$.
  \item \label{recs-degree} Every vertex $v \in \sparseV$ has unweighted in-degree at most
    $\bigO{k \log{n} / \eps^2}$ in $\sparseG$.
  \item \label{recs-aux-edge} For every $v \in \sparseV - r$ there is
    an edge $(r,v)$ with capacity at least
    $\bigOmega{\log{n} / \eps}$.
  \item \label{recs-cut-weights} With high probability, for all
    $S \subseteq \sparseV - r$, the weight of the in-cut induced by
    $S$ in $\sparseG$ (up to scaling by $\tau$) is at least the
    minimum of $\epsless$ times the weight of the induced in-cut in
    $G$ and $c \lambda$ for any desired constant $c > 1$, and at most
    $\epsmore$ times of the weight of the induced in-cut $G$ plus
    $\eps \lambda \sizeof{S} / k$.
  \item \label{recs-cut-sets} With high probability, for all
    $S \subseteq V - r$ such that $\sizeof{S} \leq k$ and the weight
    of the induced in-cut is $\leq \bigO{\lambda}$, we have
    $S \subseteq \sparseV$.
  \end{properties}
  In particular, if the minimum $r$-cut has weight
  $\bigTheta{\lambda}$, and the sink component of a minimum $r$-cut
  has at most $k$ vertices, then with high probability $\sparseG$
  preserves the minimum $r$-cut up to a $\apxmore$-multiplicative
  factor.
\end{lemma}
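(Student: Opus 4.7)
The construction has four steps: (i) rescale all edge weights by $\tau = \bigTheta{k \log n / (\eps^2 \lambda)}$ and cap each scaled weight at $\bigO{k \log n / \eps^2}$ (this capping is essentially harmless since no edge in a min cut can exceed $\lambda$); (ii) for every vertex $v \neq r$, add an auxiliary edge $(r, v)$ of scaled weight $\bigTheta{\log n / \eps}$; (iii) sparsify via independent random sampling so that edge weights become integers in $[1, \bigO{k \log n / \eps^2}]$ while preserving the expected in-cut weight of every vertex set; and (iv) contract into $r$ every vertex whose unweighted in-degree in the resulting graph exceeds $C k \log n / \eps^2$ for a sufficiently large absolute constant $C$. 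Properties 1, 2, and 3 are immediate from the construction, and the whole procedure runs in randomized linear time.

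Property 4 reduces to Chernoff concentration. For each fixed $S$ with $\sizeof{S} \leq k$, the scaled sampled in-weight of $S$ is a sum of independent bounded variables whose mean equals $\tau$ times the in-cut weight of $S$ in $G$, which is at least $\tau \lambda = \Omega(k \log n / \eps^2)$. With appropriately chosen sampling probabilities, a multiplicative Chernoff bound yields $(1 \pm \eps)$ concentration with failure probability $n^{-\Omega(k)}$, and a union bound over the $n^{O(k)}$ candidate sets establishes simultaneous concentration with high probability. The deterministic aux contribution $\sizeof{S} \cdot \bigTheta{\log n / \eps}$ unscales to $\bigTheta{\eps \lambda \sizeof{S} / k}$, providing the additive slack. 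For $\sizeof{S} > k$, the aux edges alone contribute at least $c \tau \lambda$ in the scaled graph, giving the $c \lambda$ lower bound in the $\min$ directly.

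For property 5, fix $S$ with $\sizeof{S} \leq k$ and in-cut weight at most $\bigO{\lambda}$ in $G$, and let $v \in S$. By the upper bound from property 4, the scaled sampled in-cut of $S$ is at most $\bigO{k \log n / \eps^2}$. Since each in-edge of the sampled graph has weight at least $1$, since at most $k - 1$ of $v$'s in-neighbors lie in $S$, and since only $r$ contributes an in-edge from outside $S$ via the aux edge, all remaining in-edges to $v$ originate in $V \setminus (S \cup \{r\})$ and contribute to the in-cut of $S$. Therefore $v$'s unweighted in-degree is at most $k + \bigO{k \log n / \eps^2}$; for $C$ large enough, $v$ is not contracted, so $v \in \sparseV$.

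The main technical obstacle is calibrating constants coherently: the contraction threshold $C$ must dominate the hidden constant in the sampled in-cut upper bound from property 4, while the aux edge weight must simultaneously force the $c \lambda$ lower bound for large $\sizeof{S}$ and match the $\eps \lambda \sizeof{S} / k$ Chernoff slack for small $\sizeof{S}$. The final preservation claim then follows by applying properties 4 and 5 to the sink component of any minimum $r$-cut.
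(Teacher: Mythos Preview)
Your construction and the arguments for properties 1--3 and 5 match the paper. The gap is in your proof of property 4 for sets with $|S| > k$. You claim that for such $S$ ``the aux edges alone contribute at least $c\tau\lambda$,'' but the aux contribution is $|S|\cdot\Theta(\log n/\eps)$ in the scaled graph while $c\tau\lambda = \Theta(k\log n/\eps^2)$; for $|S|$ only slightly above $k$ this is too small by a factor of $\Theta(1/\eps)$. So the aux-only argument covers only $|S|\ge\Theta(k/\eps)$. Your multiplicative Chernoff argument does not extend to the intermediate range $k<|S|<\Theta(k/\eps)$ either: it yields failure probability $n^{-\Omega(k)}$ per set (and even that presumes $f(S)\ge\lambda$, which the lemma does not assume), whereas there are up to $n^{\Theta(k/\eps)}$ sets of those sizes, so the union bound fails. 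The upper bound in property 4 has the same problem for $|S|>k$.

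The paper handles all $S$ uniformly via an additive--multiplicative Chernoff bound: $\Pr[g(S)\le(1-\eps)f(S)-\gamma]\le e^{-\eps\gamma/\tau}$, with $\gamma=\eps\lambda|S|/(2k)$ chosen so that the failure probability is $n^{-\Omega(|S|)}$ independently of $f(S)$ and of $|S|$, and the union bound goes through at every scale. The aux edges, each of unscaled weight $\eps\lambda/(2k)$, then exactly cancel this $|S|$-dependent additive slack, giving $h(S)\ge(1-\eps)f(S)$ for every $S$ (and symmetrically the upper bound). In other words, the role of the aux edges is not to dominate the cut for large $|S|$ but to absorb the additive error needed to make the union bound uniform.
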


\begin{proof}
  Consider the following randomized algorithm applied to the input
  graph $G$.
  \begin{quote}
    \begin{steps}
    \item \labelstep{rec-first-step} Let
      $\tau = c_{\tau} \eps^2 \lambda / k \log{n}$ and
      \begin{math}
        \Delta = 
        c_\Delta k \log{n} / \eps^2
      \end{math}
      for a sufficiently small constant $c_\tau > 0$ and a sufficiently
      large constant $c_\Delta > 0$.
    \item \labelstep{rec-importance-sample} Importance sample each
      edge weight to be a discrete multiple of $\tau$. Drop any edge
      with weight $0$.
    \item \labelstep{rec-extra-edges} Add an edge of weight
      $\eps \lambda / 2 k$ from the root to every vertex.
      \commentitem{Decreasing $c_{\tau}$ and $\eps$ as needed, we
        assume $\ec$ and $\eps \ec / 2k$ are multiples of $\tau$}.
    \item \labelstep{rec-rescale} Scale down all edge weights by
      $\tau$ (which makes them integers).
    \item \labelstep{rec-truncate} Truncate all edge weights to be at
      most $c_w k \log{n} / \eps^2$ for a sufficiently large constant
      $c_w > 0$ (while maintaining integrality).
    \item \labelstep{rec-contract} For any vertex $v$ with unweighted
      in-degree $\geq \Delta$, contract $v$ into $r$.
    \end{steps}
  \end{quote}
  Consider the graph $\sparseG$ obtained by the above steps.  Of the
  claimed properties, \ref{recs-weights}, \ref{recs-degree}, and
  \ref{recs-aux-edge} follow directly from the construction. The
  remaining proof is dedicated to proving the high-probability claims
  in \ref{recs-cut-weights} and \ref{recs-cut-sets}.  We first show
  that the initial steps \refstep{rec-first-step} to
  \refstep{rec-extra-edges} -- before rescaling -- preserves the
  weights of the $r$-cuts in the sense of \ref{recs-cut-weights}
  (without the rescaling). We then analyze the remaining steps which
  rescale and contract the graph.

  For each set $S$, let $f(S)$ denote the weight of the in-cut at $S$
  in $G$. Let $g(S)$ denote the randomized weight of the in-cut after
  step \refstep{rec-importance-sample}.  Let $h(S)$ denote the
  randomized weight of the in-cut of $S$ after adding the auxiliary
  edges in \refstep{rec-extra-edges}.  The first claim analyzes the
  concentration of $g(S)$ for all sets $S$.

  \begin{claims}
  \item \labelclaim{rec-concentration} With high probability, for all
    $S \subseteq V$
    \begin{align*}
      \absvof{g(S) - f(S)} \leq \eps f(S) + \frac{\eps \lambda
      \sizeof{S}}{2 k}.
    \end{align*}
  \end{claims}
  The above claim consists of an upper and lower bound on $g(S)$ for
  all $S$.  We first show the lower bound on $g(S)$ holds for all $S$
  with high probability.  Fix $S \subseteq V$.  $g(S)$ is an
  independent sum with expected value $f(S)$ and where each term in
  the sum is nonnegative and varies by at most $\tau$. By a variation
  of standard Chernoff
  inequalities\footnote{\label{footnote:additive-chernoff} Here we
    apply the following bounds (appropriately rescaled) which follow
    from the same proof as the standard multiplicative Chernoff bound.
    \begin{quote}
      \itshape Let $X_1,\dots,X_n \in [0,1]$ independent random
      variables. Then for all $\eps > 0$ sufficiently small and all
      $\gamma > 0$,
      \begin{align*}
        \probof{X_1 + \cdots + X_n \leq \epsless \evof{X_1 + \cdots + X_n} -
        \gamma} \leq e^{-\eps \gamma}
      \end{align*}
      and
      \begin{align*}
        \probof{X_1 + \cdots + X_n \geq \epsmore \evof{X_1 + \cdots + X_n} +
        \gamma} \leq e^{-\eps \gamma}.
      \end{align*}
    \end{quote}
  }, for any $\gamma \geq 0$, we have
  \begin{align*}
    \probof{g(S) \leq \epsless  f(S) - \gamma} \leq e^{- \eps \gamma /
    \tau}
    =                           %
    n^{- \gamma k \log{n} / c_{\tau} \eps \lambda},
  \end{align*}
  In particular, for $\gamma = \eps \lambda \sizeof{S} / 2 k$, the RHS
  is at most
  \begin{math}
    n^{- c_0 \sizeof{S}}
  \end{math}
  where $c_0$ is a constant under our control (via $c_{\tau}$). For
  large enough $c_0$, we can take the union bound over all sets of
  vertices.  This establishes that the lower bounds for $g(S)$ hold
  for all $S$ with high probability. The upper bounds also hold with
  high probability by a symmetric argument.

  Now we analyze the in-cuts after step
  \refstep{rec-extra-edges}.
  Recall that for $S \subseteq V$, $h(S)$ denotes the weight of
  the in-cut of $S$ after adding the auxiliary edges in
  \refstep{rec-extra-edges}.
  \begin{claims}
  \item \labelclaim{rec-padded-concentration} With high
    probability, for all $S \subseteq V - r$, we have
    \begin{align*}
      \epsless f(S) \leq h(S) \leq \epsmore f(S) + \eps \lambda
      \sizeof{S} / k.
    \end{align*}
  \end{claims}
  Indeed, we have $h(S) = g(S) + \eps \lambda \sizeof{S} / 2k$ for all
  $S \subseteq V - r$. The additive term introduced by $h$ offsets the
  additive error in the lower bound on $g(S)$ in
  \refclaim{rec-concentration}. This term also adds on to the additive
  error in the upper bound of \refclaim{rec-concentration} for a total
  of $\eps \lambda \sizeof{S} / k$. Thus in the high probability event
  of \refclaim{rec-concentration}, we have the bounds described by
  \refclaim{rec-padded-concentration} for all $S$.

  Henceforth, let us assume that the high probability event in
  \refclaim{rec-padded-concentration} holds. (Otherwise the algorithm
  fails.)  \refclaim{rec-padded-concentration} implies that, after
  step \refstep{rec-extra-edges}, the cuts in $\sparseG$ preserve the
  weight of the cuts in $G$ in the approximate sense of
  \ref{recs-cut-weights} (without the scaling).  Now, after step
  \refstep{rec-extra-edges}, all the weights are divisible by $\tau$.
  After scaling down by $\tau$ in step \refstep{rec-rescale}, we will
  continue to preserve the $r$-cuts in the desired sense (up to
  scaling). Truncating weights in $\sparseG$ to
  $\bigO{\lambda / \tau}$ in \refstep{rec-truncate} decreases the
  weight of some cuts, but to no less than $\bigO{\lambda /
    \tau}$. The contractions in step \refstep{rec-contract} only
  removes some $r$-cuts from consideration and does not effect the
  weight of any remaining cuts This establishes property
  \ref{recs-cut-weights}.

  To show that the contractions in step \refstep{rec-contract}
  preserve property \ref{recs-cut-sets}, let $\Sink$ be the sink
  component of any $r$-cut of capacity $\leq \bigO{\lambda}$ and with
  $\sizeof{\Sink} \leq k$. We claim that any vertex in $\Sink$, has
  unweighted in-degree at most $\Delta$ in the graph obtained after
  \refstep{rec-rescale}. Indeed, fix any such vertex $v \in \Sink$,
  and consider the edges going into $v$. At most $k - 1$ of those
  edges can come from another vertex in $\Sink$, since $\Sink$ has at
  most $k$ vertices. The remaining edges must be in the in-cut of
  $\Sink$, and the in-cut of $\Sink$ has at most
  $\bigO{\lambda / \tau} = \bigO{k \log{n} / \eps^2}$ edges (per
  property \ref{recs-cut-weights}). Thus there are less than
  $\Delta = \bigO{k \log{n} / \eps^2}$ edges incident to $v$. In
  conclusion, any vertex $v$ with in-degree more than $\Delta$ lies
  outside $\Sink$ and can be safely contracted into the root. This
  establishes property \ref{recs-cut-sets} and completes the proof.
\end{proof}

\paragraph{Rooted edge connectivity for small sink components.}

This section presents an approximation algorithm for rooted vertex
connectivity for the particular setting where the sink component is
small. In particular, we are given an upper bound $k$ on the number of
vertices in the sink component, and want to obtain running times of
the form $n \poly{k}$.  When a similar situation arose previously for
small integer capacities in \cite{cq-simple-connectivity},
\cite{cq-simple-connectivity} used Gabow's algorithm which works well
for unweighted multigraphs. Here, while
\reflemma{apx-rec-sparsification} produces relatively sparse graphs
with integral edge capacities, the edge capacities imply a multigraph
with roughly $n k^2$ edges, and Gabow's algorithm would then take
roughly $nk^3$ time.  This section develops an alternative approach
that reduces the dependency on $k$ to $k^2$, and is inspired by
existing local algorithms for (global and rooted) vertex cuts
\cite{forster+20,cq-simple-connectivity}.  Compared to
\cite{forster+20,cq-simple-connectivity}, the algorithm here is for
edge cuts and is designed to take full advantage of the properties of
the graph produced by \reflemma{apx-rec-sparsification}.  These
modifications have some tangible benefits.  First, it improves the
dependency on $k$ and $\epsilon$. (We estimate that previous
approaches lead to an $\apxO{n k^{3}/\eps^{5}}$ running time.)
Second, the local subroutine here is deterministic whereas before they
were randomized.  Third and last, as suggested by the better running
time and the determinism, the version presented here is arguably
simpler and more direct then the previous algorithms (for this
setting).

\begin{lemma}
  \labellemma{local-apx-rec} Let $G = (V,E)$ be a directed graph with
  positive edge weights $w: V \to \preals$. Let $r \in V$ be a fixed
  root vertex.  Let $\eps \in (0,1)$, $\ec > 0$ and
  $k \in \naturalnumbers$ be given parameters.  There is a randomized
  linear time Monte Carlo algorithm that, with high probability,
  produces a deterministic data structure that supports the following
  query.

  For $t \in V$, let $\ec_{t,k}$ denote the weight of the minimum
  $(r,t)$-cut such that the sink component has at most $k$ vertices.
  Given $t \in V$, deterministically in $\bigO{k^3 \log{n} / \eps^4}$
  time, the data structure either (a) returns the sink component of an
  $(r,t)$-cut of weight at most $\ec_{t,k} + \eps \lambda$, or (b)
  declares that $\ec_{t,k} > \ec$.
\end{lemma}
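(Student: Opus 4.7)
The plan is to implement the data structure as the sparsified graph $\sparseG$ produced by \reflemma{apx-rec-sparsification} (applied with the given parameters $\ec$, $k$, $\eps$); this is the randomized Monte Carlo preprocessing step. Given a query $t \in \sparseV$, the query algorithm will be a localized Ford--Fulkerson computation in the \emph{reversed} sparsified graph, routing flow from $t$ to $r$, and terminating either when the pushed flow exceeds a cap of roughly $(1+\eps)\ec/\tau$ (in which case we declare $\ec_{t,k} > \ec$) or when BFS from $t$ in the residual graph fails to reach $r$ (in which case we return that reachable set as the sink component of the $(r,t)$-cut in the original orientation). If $t \notin \sparseV$, property \ref{recs-cut-sets} lets us conclude there is no small-sink cut of weight $\leq \bigO{\lambda}$ containing $t$, and we can declare $\ec_{t,k} > \ec$ immediately.

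For each augmentation, I would perform a plain BFS from $t$ in the current residual graph until either $r$ is discovered (augment along the found path and repeat) or the BFS terminates without reaching $r$ (output the reachable set). The key structural observation powering the analysis is that, in the reversed sparsified graph, every vertex $v \neq r$ has a direct edge to $r$ of capacity $\bigOmega{\log{n}/\eps}$ (from property \ref{recs-aux-edge}). Consequently, any vertex $v$ that is residual-reachable from $t$ but not yet $r$ must have its direct $(v,r)$ edge \emph{saturated}, since otherwise BFS would have found the augmenting path $t \rightsquigarrow v \to r$. Thus, if during some BFS we reach a set $S$ of vertices before finding $r$, then all $\sizeof{S}$ vertices in $S$ have contributed at least $\bigOmega{\log{n}/\eps}$ to the total flow through their auxiliary edges. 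Capping the flow at $\bigO{\ec/\tau} = \bigO{k\log n / \eps^{2}}$ therefore forces $\sizeof{S} = \bigO{k/\eps}$ throughout the entire algorithm. Combined with the $\bigO{k\log n / \eps^{2}}$ bound on unweighted in-degree from property \ref{recs-degree}, each BFS touches $\bigO{(k/\eps) \cdot (k\log n / \eps^{2})} = \apxO{k^{2}/\eps^{3}}$ edges, and the total number of augmentations is at most the flow cap $\bigO{k \log n / \eps^{2}}$, yielding the claimed $\bigO{k^{3} \log n / \eps^{4}}$ bound (up to logarithmic slack).

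For correctness, when BFS fails to reach $r$, the returned reachable set $S$ is the source side of a $(t,r)$-cut in the reversed graph, equivalently the sink component of an $(r,t)$-cut in the original graph with $\sizeof{S} \leq \bigO{k/\eps}$; its weight equals the flow value, which by max-flow--min-cut equals the minimum $(r,t)$-cut in $\sparseG$, and transferring back to $G$ via property \ref{recs-cut-weights} gives weight at most $\ec_{t,k}$ plus an additive $\eps\ec\sizeof{S^{\star}}/k \leq \eps \ec$ slack for the true optimal sink $S^{\star}$ of size $\leq k$ (the $(1+\eps)$ multiplicative factor gets absorbed into the additive $\eps\ec$ by using a freshly scaled $\eps$). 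The main obstacle is establishing that the BFS-reachable set stays small even when we do \emph{not} terminate with a cut: the saturation argument must be invoked carefully since only saturated \emph{auxiliary} edges contribute the $\bigOmega{\log n/\eps}$ lower bound per vertex, whereas the flow is distributed arbitrarily over all edges. I would formalize this by, for each $v$ in a BFS layer, charging to the flow units crossing its auxiliary edge $(v,r)$, which are disjoint across distinct $v$ and thus sum to at most the total flow. The remaining subtlety is choosing the flow cap and $\eps$-scaling so that the correct dichotomy --- a near-optimal cut is returned versus $\ec_{t,k} > \ec$ is certified --- holds with no false negatives, which I would handle by choosing the cap as $(1+2\eps)\ec/\tau$ and invoking property \ref{recs-cut-weights} on both sides of the inequality.
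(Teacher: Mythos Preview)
Your approach is essentially identical to the paper's: apply \reflemma{apx-rec-sparsification} as preprocessing, then answer each query by running capped Ford--Fulkerson from $t$ to $r$ in the reversed sparsified graph, exploiting the auxiliary $(v,r)$ edges to bound the number of saturated vertices visited by $O(k/\eps)$. Your correctness argument and the saturation-based bound on the size of the visited set are both right.

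However, your per-search running-time bound is loose by a factor of $\log n / \eps$, and this is not merely ``logarithmic slack.'' You bound the work per BFS by (number of visited vertices) $\times$ (max residual out-degree) $= O(k/\eps) \cdot O(k\log n/\eps^2) = O(k^2\log n/\eps^3)$; multiplying by the $O(k\log n/\eps^2)$ augmentation cap gives $O(k^3\log^2 n/\eps^5)$, not the claimed $O(k^3\log n/\eps^4)$. The paper sharpens the per-search bound to $O(k^2/\eps^2)$ by a different count: every edge the search traverses, except the final one landing on an unsaturated vertex, has \emph{both} endpoints saturated. Since there are only $O(k/\eps)$ saturated vertices, and the sparsified graph restricted to non-root vertices is simple (contractions only create parallel edges incident to $r$), there are at most $O((k/\eps)^2)$ such edges, independent of the $O(k \log n/\eps^2)$ degree bound. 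This tighter edge count is the ingredient missing from your analysis; with it, the claimed $O(k^3\log n/\eps^4)$ query time follows.
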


\begin{proof}
  We first apply \reflemma{apx-rec-sparsification} to $G$ with root
  $r$ and parameters $\lambda$, $k$, and $c \eps$ for a sufficiently
  small constant $c > 0$.  This produces an edge capacitated graph
  $\sparseG= (\sparseV,\sparseE)$, where $r \in \sparseV$ and
  $\sparseV \subseteq V$.  We briefly highlight the features of
  $\sparseG$ guaranteed by \reflemma{apx-rec-sparsification} that we
  leverage.  The edge weights in $\sparseG$ are scaled down so that
  the weight $\ec$ in $G$ corresponds to weight
  $\bigO{k \log{n}/ \eps^2}$ in $\sparseG$.  The edge weights are
  integral, with value between $1$ and $\bigO{k \log{n} / \eps^2}$.
  Every vertex has unweighted in-degree at most
  $\bigO{k \log{n} / \eps^2}$.  Lastly, for every non-root vertex
  $v \in \sparseV - r$, there is an edge from $r$ to $v$ with capacity
  at least $\bigOmega{\log{n}/\eps}$.

  With high probability, we have the following guarantees on the cuts
  of $\sparseG$.  Modulo scaling, every $r$-cut in $\sparseG$ has
  weight no less than the minimum of its weight in $G$ and $2 \ec$.
  Additionally, moduling scaling, the sink component of an $r$-cut in
  $G$ with capacity at most $\ec$ and at most $k$ vertices in the sink
  component is preserved in $\sparseV$, and the corresponding cut in
  $\sparseG$ has weight at most an $c_0 \eps \ec$ additive factor
  bigger in $\sparseG$, for any desired constant $c_0 > 0$. In
  particular we preserve $\ec_{t,k}$ within the desired approximation
  factor for all $t$ such that $\ec_{t,k} \leq \ec$. Henceforth we
  assume that the edge cuts are preserved in the sense described
  above. Otherwise we consider the algorithm to have failed.

  We propose a data structure that, given $t \in V$, will search for a
  small $(r,t)$-cut in $\sparseG$ via a customized, edge-capacitated
  flow algorithm. The search may or may not return the sink component
  of an $(r,t)$-cut.  If the search does return a sink component, and
  the corresponding in-cut in $\sparseG$ has weight that, upon
  rescaling back to the scale of the input graph $G$, is at most
  $(1 + \eps / 2) \ec$, the data structure returns it. Otherwise the
  data structure indicates that $\ec_{t,k} > \ec$.

  To develop the $(r,t)$-cut algorithm, let $\revG$ be the reversed
  graph of $\sparseG$.  In $\revG$, given $t \in V$, we run a
  specialization of the Ford-Fulkerson algorithm \cite{ford-fulkerson}
  with source $t$ and sink $r$ that either computes a minimum
  $(t,r)$-cut or concludes that the minimum $(t,r)$-cut is at least
  $\bigO{k \log{n} / \eps^2}$ after $\bigO{k \log{n} / \eps^2}$
  iterations.  To briefly review, each iteration in the Ford-Fulkerson
  algorithm searches for a path from $t$ to $r$ in the residual graph
  of the flow to that point.  If such a path is found, then it routes
  one unit of flow along this path, and updates the residual graph by
  reversing (one unit capacity) of each edge along the path.  After
  $\ell$ successful iterations we have a flow of size $\ell$ and in
  particular the minimum $(t,r)$-cut is at least $\ell$.  If, after
  $\ell$ iterations, there is no path in the residual graph from $t$
  to $r$, then the set of vertices reachable from $t$ gives a minimum
  $(t,r)$-cut of size $\ell$.

  Within the Ford-Fulkerson framework, we give a refined analysis that
  takes advantages of the auxiliary $(v,r)$ edges (for all $v \neq r$)
  that each have capacity at least $\bigOmega{\log{n} / \eps}$.  Call
  a non-root vertex $v$ \defterm{saturated} if the auxiliary edge
  $(v,r)$ is saturated; that is, if $(v,r)$ is not in the residual
  graph.  (A vertex $v$ is called \defterm{unsaturated} if it is not
  saturated.)  We modify the search for an augmenting path so that
  whenever we visit an unsaturated $v$, we automatically complete a
  path to $r$ via $(v,r)$.  It remains to bound the running time of
  this search.  We first bound the number of saturated vertices.
  \begin{claims}
  \item \labelclaim{saturated-vertices} There are at most
    $\bigO{k / \eps}$ saturated $v$'s.
  \end{claims}
  Indeed, each saturated $v$ implies $\bigOmega{\log{n} / \eps}$
  units of flow via the edge $(v,r)$. The size of the flow is
  limited to $\bigO{k \log{n} / \eps^2}$.

  The above bound on the number of saturated vertices leads to
  the following bound on the total number of edges visited in each
  search.
  \begin{claims}
  \item \labelclaim{local-rec-search-length} Every (modified) search
    for an augmenting path traverses at most
    $\bigO{k^2 / \eps^2}$ edges.
  \end{claims}

  We first observe that every vertex visited in the search, except the
  unsaturated vertex terminating the search, is a saturated vertex.
  By \refclaim{saturated-vertices}, there are at most
  $\bigO{k / \eps}$ saturated vertices.  In turn there are at most
  $\bigO{k^2 / \eps^2}$ edges between saturated vertices.  Thus we can
  traverse at most $\bigO{k^2 / \eps^2}$ edges before visiting either
  an unsaturated vertex or $r$, as claimed.

  \refclaim{local-rec-search-length} implies that each iteration takes
  $\bigO{k^2/\eps^2}$ time.  The algorithm runs for at most
  $\bigO{k \log{n} / \eps^2}$ iterations before either finding a
  minimum $(r,t)$-cut or concluding that the minimum $(r,t)$-cut in
  $\sparseG$ is at least $\bigO{k \log{n} / \eps^2}$ (which
  corresponds to weight $\bigO{\lambda}$ in $G$).  The total running
  time follows.
\end{proof}

We now present the overall algorithm for finding $r$-cuts with small
sink components. The algorithm combines \reflemma{local-apx-rec} with
randomly sampling for a vertex $t$ in the sink component of the
desired $r$-cut.

\begin{lemma}
  \labellemma{apx-rec-small-sink} Let $G = (V,E)$ be a directed graph
  with positive edge weights $w: E \to \preals$. Let $r \in V$ be a
  fixed root vertex.  Let $\ec > 0$ and $k> 0$ be given parameters.
  There is a randomized algorithm that runs in
  $\bigO{m \log k + n k^2 \log{n}^2 / \eps^4}$ time and has the
  following guarantee.  If there is an $r$-cut of capacity at most
  $\ec$ and where the sink component has at most $k$ vertices, then
  with high probability, the algorithm returns an $r$-cut of capacity
  at most $\epsmore \ec$.
\end{lemma}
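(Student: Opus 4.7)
The plan is to combine the local cut data structure of \reflemma{local-apx-rec} with random sampling to land on a vertex in the sink component of the desired $r$-cut. Because the actual sink size may be anything from $1$ up to $k$, uniform sampling at a single fixed rate would either miss the sink (when the sink is small) or waste queries (when the sink is large), so I would guess the sink size by powers of two.

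Concretely, for each $j = 0, 1, \ldots, \lceil \log_2 k \rceil$, let $k_j = 2^j$. For each $j$, first invoke \reflemma{local-apx-rec} with parameters $\ec$, $k_j$, and $\eps$ to build the corresponding data structure in $\bigO{m}$ randomized preprocessing time. Then sample $\bigTheta{(n/k_j) \log n}$ vertices $t$ uniformly at random from $V$, and query the data structure on each. Return the smallest $r$-cut found across all iterations and all queries.

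For correctness, let $\optSink$ be the sink component of some $r$-cut of weight at most $\ec$, and set $s := \sizeof{\optSink} \leq k$. Pick the index $j$ with $k_{j-1} < s \leq k_j$. In the $j$-th iteration, a uniformly random vertex of $V$ lies in $\optSink$ with probability $s/n \geq k_j/(2n)$, so $\bigTheta{(n/k_j)\log n}$ independent samples contain some $t \in \optSink$ with high probability. For such a $t$ we have $\ec_{t,k_j} \leq \ec$, whence by \reflemma{local-apx-rec} the data structure returns the sink component of an $(r,t)$-cut of weight at most $\ec_{t,k_j} + \eps \ec \leq \epsmore \ec$. A union bound over the $O(\log k)$ preprocessing invocations of \reflemma{local-apx-rec} handles their Monte Carlo failure probabilities.

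For the running time, preprocessing costs $\bigO{m}$ per iteration and totals $\bigO{m \log k}$. In iteration $j$, each query costs $\bigO{k_j^3 \log n / \eps^4}$ and we issue $\bigO{(n/k_j) \log n}$ queries, contributing $\bigO{n k_j^2 \log^2 n / \eps^4}$. The geometric series $\sum_{j} n k_j^2$ is dominated by its largest term at $j = \lceil \log_2 k \rceil$ and evaluates to $\bigO{n k^2 \log^2 n / \eps^4}$, matching the claim. The only real subtlety is the bucketed guessing of the sink size; once $k_j$ matches the true sink size up to a factor of two, the sampling probability and running-time analysis are both routine.
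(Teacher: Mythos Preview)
Your proposal is correct and follows essentially the same approach as the paper: guess the sink size $\ell$ by powers of two, build the data structure of \reflemma{local-apx-rec} for each guess, sample $\bigO{(n/\ell)\log n}$ vertices and query each, and observe that the geometric sum of per-level costs is dominated by the top level. The paper's argument is the same in structure and analysis, with only cosmetic differences (e.g., sampling from $V - r$ rather than $V$).
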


\begin{proof}
  Let $\optSink$ be the sink component of the minimum $r$-cut subject
  to $\sizeof{\optSink} \leq k$. Suppose the capacity of the in-cut of
  $\optSink$ is at most $\ec$. (Otherwise the algorithm makes no
  guarantee.)

  Suppose we had a factor-2 overestimate $\ell$ of the number of
  vertices in $\optSink$; i.e.,
  $\sizeof{\optSink} \leq \ell \leq 2 \sizeof{\optSink}$. We apply
  \reflemma{local-apx-rec} with parameter upper bound $\ec$ on the
  size of the cut and $\ell$ on the number of vertices in the sink
  component, which produces a deterministic data structure that with
  high probability is correct for all queries. Let us assume the data
  structure is correct (and otherwise the algorithm fails). We then
  randomly sample $\bigO{n \log{n} / \ell}$ vertices from $V - r$. For
  each sampled vertex $t$, we query the data structure from
  \reflemma{local-apx-rec}. Observe that if $t \in \optSink$, then the
  query for $t$ return an $(r,t)$-cut with capacity at most
  $\epsmore \ec$. With high probability we sample at least one vertex
  from $\optSink$, with produces the desired $r$-cut. By
  \reflemma{local-apx-rec}, the running time for
  $\bigO{n \log{n} / \ell}$ queries is
  \begin{math}
    \bigO{m + n \ell^2 \log{n}^2 / \eps^4}.
  \end{math}

  While we do not have such an estimate $\ell$ \emph{a priori}, we can
  try all powers of $2$ between $1$ and $2k$. One of these choices of
  $\ell$ will be accurate and succeed with high probability.  Note
  that the sum of $\bigO{n \ell^2 \log{n}^2 / \eps^4}$ over the range
  of $\ell$ is dominated by the maximum $\ell$.  The claimed running
  time follows.
\end{proof}


\paragraph{Rooted connectivity for large sink components.}
The second subroutine we present is better suited for cases where the
sink component is very large.
\begin{lemma}
  \labellemma{apx-rec-big-sink} Let $G = (V,E)$ be a directed graph
  with positive edge weights.  Let $r \in V$ be a fixed root.  Let
  $\eps, k, \ec > 0$ be given parameters with $\eps$ sufficiently
  small.  Let $\optec$ be the minimum weight of all $r$-cuts where the
  sink component has between $k/2$ and $k$ vertices.  Then there is a
  randomized $\apxO{n^2 / \eps^2 + n^{2.5} / k}$ time algorithm that
  has the following guarantee. If $\optec \leq \ec$, then with
  probability, the algorithm returns an $r$-cut of capacity at most
  $\optec + \eps \ec$.
\end{lemma}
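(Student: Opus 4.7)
The plan is to combine the sparsification of \reflemma{apx-rec-sparsification} with random vertex sampling and a fast $(s,t)$-max-flow subroutine. First I would invoke \reflemma{apx-rec-sparsification} on $G$ with parameters $\lambda$, $k$, and approximation parameter $\eps/c$ for a sufficiently large constant $c$, obtaining $\sparseG = (\sparseV, \sparseE)$ and a scaling factor $\tau$. By properties \ref{recs-weights} and \ref{recs-degree}, $\sparseG$ has $\apxO{nk/\eps^2}$ edges. I would then sample $\Theta((n/k)\log n)$ vertices $t$ uniformly at random from $V - r$, and for each that lies in $\sparseV$, compute the exact minimum $(r,t)$-cut in $\sparseG$ using the $\apxO{m + n^{1.5}}$ edge-capacitated $(s,t)$-max-flow algorithm of \cite{brand+21}. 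The sink component of the best such cut, with its in-cut weight measured in the original graph $G$, is the output.

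For correctness, since $\sizeof{\optSink} \geq k/2$, a standard sampling argument guarantees that with high probability some sample $t$ lies in $\optSink$; meanwhile property \ref{recs-cut-sets} ensures $\optSink \subseteq \sparseV$ with high probability. For such a $t$, the minimum $(r,t)$-cut in $\sparseG$ has scaled weight at most the scaled weight of the in-cut at $\optSink$ itself, which by \ref{recs-cut-weights} and $\sizeof{\optSink} \leq k$ is at most $(1+\eps)\optec + \eps \lambda$. To translate this sparsified weight back to $G$, I would choose the constant $c$ in \ref{recs-cut-weights} so that $(1+\eps)\optec + \eps \lambda$ is strictly less than $c\lambda$; then the lower bound in \ref{recs-cut-weights}, applied to the cut returned by the max-flow subroutine, forces its original in-cut weight in $G$ to be at most $(1-\eps)^{-1}$ times its scaled sparsified weight, and hence at most $\optec + O(\eps \lambda)$. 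Absorbing the constant into $\eps$ at the outset gives the claimed $\optec + \eps \lambda$ bound.

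The running time decomposes as $\apxO{n/k}$ max-flow calls, each of cost $\apxO{m + n^{1.5}} = \apxO{nk/\eps^2 + n^{1.5}}$, for a total of $\apxO{n^2/\eps^2 + n^{2.5}/k}$ as claimed. The main subtlety, rather than a genuine obstacle, is bookkeeping around the two-sided error from sparsification: the additive slack $\eps \lambda \sizeof{S}/k$ must stay benign (which uses $\sizeof{\optSink} \leq k$), the computed cut must fall under the $c\lambda$ threshold in \ref{recs-cut-weights} so its lower bound becomes meaningful, and all polynomially many high-probability events (sparsification correctness and hitting $\optSink$) must hold simultaneously via a union bound. None of these require new ideas beyond the properties already provided by \reflemma{apx-rec-sparsification} and the black-box use of \cite{brand+21}.
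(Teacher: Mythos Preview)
Your proposal is correct and follows essentially the same approach as the paper: sparsify via \reflemma{apx-rec-sparsification}, sample $\apxO{n/k}$ candidate sinks, and run the $\apxO{m+n^{1.5}}$ max-flow of \cite{brand+21} on the sparsified graph. In fact, your exposition is more careful than the paper's own proof in explicitly handling the translation of cut weights from $\sparseG$ back to $G$ via the lower bound in \ref{recs-cut-weights} and the choice of the threshold constant $c$.
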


\begin{proof}
  We assume the graph $\sparseG = (\sparseV, \sparseE)$ produced by
  \reflemma{apx-rec-sparsification} with parameters $k$, $\ec$, and
  $c \eps$ for a sufficiently small constant $c > 0$.  Let
  $\optsink \subseteq V - r$ be the sink component of the minimum
  $r$-cut subject to $k/2 \leq \sizeof{\optsink} \leq k$, and suppose
  the capacity of $\optsink$ is at most $\ec$.  We randomly sample
  $\bigO{n \log{n} / k}$ sinks $t \in \sparseV$, and for each compute
  the minimum $(r,t)$ cut. We output the minimum of these cuts.

  If $\optec \leq \ec$, then \reflemma{apx-rec-sparsification} asserts
  that with high probability we have $\optsink \subseteq \sparseV$.
  With high probability, some $t$ will be drawn from $\optsink$ and
  the corresponding $(r,t)$-cut has capacity at most
  $\epsmore \optec + \eps \lambda$.  We use
  $\ectime{m}{n} = \apxO{m+n^{1.5}}$ \cite{brand+21}.  By
  \reflemma{apx-rec-sparsification}, we have
  $m = \bigO{n k \log{n} / \eps^2}$.
\end{proof}

\paragraph{Rooted and global edge connectivity.}
Finally we combine the two approaches above for rooted connectivity --
\reflemma{apx-rec-small-sink} for small components, and
\reflemma{apx-rec-big-sink} for large components -- in an overall
algorithm for rooted edge connectivity and establish
\reftheorem{apx-ec}. We restate \reftheorem{apx-ec} for the sake of
convenience.

\ApxEC*
\begin{proof}
  We focus on rooted connectivity from which global connectivity
  follows immediately.  As the weights are polynomially bounded, we
  can guess the rooted connectivity $\lambda$ and the number of
  vertices $k$ in the sink component of the minimum rooted cut up to a
  multiplicative factor of 2 with polylogarithmic overhead. For each
  choice of $k$, we apply the faster of \reflemma{apx-rec-small-sink}
  and \reflemma{apx-rec-big-sink}.  We balance the
  \begin{math}
    \apxO{n  k^2 / \eps^4}
  \end{math}
  running time of \reflemma{apx-rec-small-sink} with the
  \begin{math}
    \apxO{n^2 / \eps^2 + n^{2.5} / k}
  \end{math}
  running time of \reflemma{apx-rec-big-sink}.  For
  $k = \sqrt{n} \eps^{4/3}$, we obtain the claimed running time.
\end{proof}

\paragraph{$o(mn)$-time approximations.}

\providecommand{\optSink}{T^{\star}}
\providecommand{\sparseG}{G_0}  %

Recall from the introduction that there is theoretical interest
in obtaining a $o(mn)$ running time, partly to improve on the
running time of \cite{hao-orlin} for exact edge connectivity and
partly due to the connection to the flow-decomposition barrier.
\reftheorem{apx-ec} gives an $o(mn)$ running time for all but
the extremely sparse regime where $m = n^{1+o(1)}$.
Part of the problem is that the $\apxO{m + n^{1.5}}$ running
time of \cite{brand+21} is not as compelling for extremely sparse
graphs.
However, the minimum $(s,t)$-cut can also be obtained in
$\apxO{m^{1.5 - \delta}}$ time for $\delta = 1/128$
\cite{glp-21}. This is slightly faster than $\apxO{m + n^{1.5}}$ for
$m = n^{1+o(1)}$.
Using this second $(s,t)$-flow algorithm leads to the following
slightly improved running time for extremely sparse graphs.

\begin{lemma}
  \labellemma{sparse-apx-rec} Let $\eps \in (0,1)$, and let
  $G = (V,E)$ be a directed graph with polynomially bounded edge
  weights.  Let $\eps > 0$.  Suppose that the minimum $(s,t)$-cut can
  be computed in $\apxO{m^{3/2-\delta}}$ time for a constant
  $\delta > 0$.  A $\epsmore$-approximate minimum rooted (or global)
  cut in $G$ can be computed with high probability in
  $\apxO{n m^{1 - 2\delta/3} / \eps^{4/3}}$ randomized time.
\end{lemma}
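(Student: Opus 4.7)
The plan is to follow the same blueprint as \reftheorem{apx-ec}, but to substitute the assumed $\apxO{m^{3/2-\delta}}$ max-flow routine for sparse graphs in place of the $\apxO{m + n^{1.5}}$ routine used inside \reflemma{apx-rec-big-sink}. Since the edge weights are polynomially bounded, we can enumerate guesses for $\lambda$ and $k$ (the weight of the minimum rooted cut and the size of its sink component) up to a factor of two with only polylogarithmic overhead, and for each guess run the faster of two subroutines.

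For small $k$, we invoke \reflemma{apx-rec-small-sink} unchanged; its $\apxO{m + n k^2 / \eps^4}$ running time is unaffected by the choice of max-flow routine because its local flow subroutine is self-contained. For large $k$, we repeat the argument of \reflemma{apx-rec-big-sink} but replace each max-flow call with the assumed $\apxO{m_0^{3/2-\delta}}$ routine, where $m_0$ is the edge count of the sparsified graph $\sparseG$. The critical observation is that $m_0 = \bigO{m}$, in addition to the $\bigO{n k \log n / \eps^2}$ bound already implicit in \reflemma{apx-rec-sparsification}: importance sampling can only drop edges, only $n$ auxiliary edges from the root are introduced, and $m \geq n - 1$ whenever $G$ is strongly connected. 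Sampling $\bigO{n \log n / k}$ sinks in $\sparseV$ and running the assumed max-flow on each therefore takes $\apxO{m + (n/k) m^{3/2 - \delta}}$ time in the large-sink regime.

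Balancing the two regimes via $n k^2 / \eps^4 = (n/k) m^{3/2-\delta}$ yields the choice $k = \eps^{4/3} m^{1/2 - \delta/3}$, and substituting back into either expression gives total running time $\apxO{n m^{1 - 2\delta/3} / \eps^{4/3}}$, as claimed. Global edge connectivity then follows from rooted edge connectivity by the standard trick of running the rooted algorithm on both $G$ and its reversal.

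The main obstacle is justifying the improved bound $m_0 = \bigO{m}$: this is what drives the gain over \reflemma{apx-rec-big-sink} in the sparse regime, since the bound $m_0 = \bigO{n k / \eps^2}$ delivered by \reflemma{apx-rec-sparsification} alone would be too loose here. Once this is in place, the rest is routine: we also need to verify that the balanced $k = \eps^{4/3} m^{1/2 - \delta/3}$ lies in the valid range $[1,n]$, which is immediate in the sparse regime of interest where $m = n^{1 + o(1)}$ and therefore $k \leq m^{1/2} \leq n^{1/2 + o(1)}$.
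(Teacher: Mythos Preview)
Your proposal is correct and reaches the same running time via the same balancing, but you take a slightly roundabout path in the large-$k$ regime. You first sparsify via \reflemma{apx-rec-sparsification} and then argue that the sparsified graph has $m_0 = \bigO{m}$ edges, so that each $\apxO{m_0^{3/2-\delta}}$ max-flow call costs at most $\apxO{m^{3/2-\delta}}$. The paper instead simply does \emph{not} sparsify in the large-$k$ regime: it samples $\bigO{(n/k)\log n}$ sinks directly in $G$ and runs the $\apxO{m^{3/2-\delta}}$ max-flow on $G$ itself. This entirely sidesteps what you call ``the main obstacle'' --- there is no $m_0$ to bound. Your argument that $m_0 = \bigO{m}$ (importance sampling only drops edges, only $n-1$ auxiliary edges are added, contraction never increases the edge count) is valid, but it is work you created for yourself by retaining a sparsification step that buys nothing in this regime. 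Otherwise the two proofs are the same: same small-$k$ subroutine, same sampling-plus-flow for large $k$, same balance point $k = \eps^{4/3} m^{1/2 - \delta/3}$.
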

\begin{proof}
  We take the same approach as \reftheorem{apx-ec} except we use the
  $\apxO{m^{3/2-\delta}}$ running time for $(s,t)$-flow, and we don't
  sparsify the graph first.  As before, we guess the weight $\lambda$
  of the minimum $r$-cut and the number of vertices $k$ in the sink
  component of the minimum $r$-cut to within a multiplicative factor
  of 2 with polylogarithmic overhead.  For a fixed choice of $k$, we
  run the faster of two options.  The first option is to invoke
  \reflemma{apx-rec-small-sink} which runs in
  $\bigO{n k^2 \log{n}^2 / \eps^4}$ randomized time.  The second
  option replaces \reflemma{apx-rec-big-sink} and is as follows. We
  sample $\bigO{k \log{n}}$ vertices with high probability. For each
  sampled vertex $t$, we compute the minimum $(r,t)$-cut in
  $\apxO{m^{3/2-\delta}}$ time. For a correct value of $k$, with high
  probability, one of the sampled vertices $t$ is in the sink
  component of the minimum cut, and we obtain the minimum $(r,t)$-cut.
  The total running time of this combined approach is obtained (up to
  logarithmic factors) by choosing $k$ to balance the two running
  times of $\apxO{n k^2 \log{n} / \eps^4}$ and
  $\apxO{n m^{3/2 - \delta}/ k}$. For
  $k = \eps^{4/3} m^{1/2- \delta / 3}$, we obtain the claimed running
  time.
\end{proof}
Balancing \reflemma{sparse-apx-rec} with \reftheorem{apx-ec} gives the
$\apxO{(mn)^{1-c} / \eps^2}$ running time, where $c> 0$ is a constant,
that is claimed in \refcorollary{flow-barrier-apx-ec}.



\section{Rooted vertex connectivity}

\labelsection{apx-vc}

In this section we present the approximation algorithm for rooted and
global vertex connectivity. Similar to \refsection{apx-ec}, the main
focus is on rooted connectivity, and the algorithm is presented in
three main parts. All three parts are parameterized by values
$\vc > 0$ and $k \in \naturalnumbers$ that, in principle, are meant to
be constant factor estimates for the rooted vertex connectivity and
the number of vertices in the sink component of the minimum rooted
vertex cut. The first part reduces the number of edges to roughly $nk$
and the rooted connectivity to $k$ in a graph with integer
weights. This sparsification is used in the remaining two parts. The
second part gives a roughly $nk^2$ time approximation algorithm for
the minimum rooted cut. The third part gives a roughly
$n^2 + n^{2.5} / k$ time approximation algorithm for the minimum
rooted cut. Balancing term leads to the claimed running time. The
rooted connectivity algorithm then leads to a global connectivity
algorithm via an argument due to \cite{hrg-00} (with some
modifications).

\paragraph{Sparsification.}
The first part is a sparsification lemma that preserves rooted vertex
cuts where the number of vertices in the sink component is below some
given parameter.  In the following, we let $\outneighbors{r}{G}$
denote the set of out-neighbors of $r$ in the graph $G$. We omit $G$
and simply write $\outneighbors{r}$ when $G$ can be inferred from the
context.
\begin{lemma}
  \labellemma{apx-rvc-sparsification} Let $G = (V,E)$ be a directed
  graph with positive vertex weights.  Let $r \in V$ be a fixed
  vertex.  Let $k, \vc > 0$ be given parameters.  Let
  $V' = V \setminus \parof{\setof{r} \cup \outneighbors{r}}$.  In
  randomized linear time, one can compute a randomized directed and
  vertex-weighted graph $\sparseG = (\sparseV, \sparseE)$, and a
  scaling factor $\tau > 0$, with the following properties.
  \begin{properties}
  \item \label{rvcs-root} $r \in \sparseV$.
  \item \label{rvcs-sinks} Let
    $\sparseV' = \sparseV \setminus \parof{\setof{r} \cup
      \outneighbors{r}{\sparseG}}$. We have $\sparseV' = V'$.
  \item \label{rvcs-weights} $\sparseG$ has integer vertex weights
    between $0$ and
    $\bigO{k \log{n} / \eps^2}$.
  \item \label{rvcs-degree} Every vertex $v \in \sparseV$ has at most
    $\bigO{k \log{n} / \eps^2}$ incoming edges.
  \item \label{rvcs-zero} Every vertex $v$ with weight $0$ has no
    outgoing edges.
  \item \label{rvcs-cut-weights} With high probability, for all
    $S \subseteq V'$, the weight of the vertex in-cut induced by $S$
    in $\sparseG$ (up to scaling by $\tau$) is at least the minimum of
    the $\epsless$ times the weight of the induced vertex in-cut in
    $G$ or $c \vc$ (for any desired constant $c > 1$), and at most
    $\epsmore$ times its weight in $G$ plus $\eps \vc \sizeof{S} / k$.
  \item \label{rvcs-small-cut-sets} With high probability, for all
    $S \subseteq V'$ such that $\sizeof{S} \leq k$ and the weight of
    the induced vertex in-cut is $\leq \bigO{\vc}$, we have
    $S \subseteq \sparseV'$. (That is, $S$ is still the sink component
    of an $r$-cut in $\sparseG$.)
  \end{properties}
  In particular, if the minimum vertex $r$-cut has weight
  $\bigTheta{\vc}$, and the sink component of a minimum vertex $r$-cut
  has at most $k$ vertices, then with high probability $\sparseG$
  preserves the minimum vertex $r$-cut up to a
  $\apxmore$-multiplicative factor.
\end{lemma}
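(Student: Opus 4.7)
The plan is to follow the same five-stage structure as \reflemma{apx-rec-sparsification} acting on vertex weights, with the "contract into the root" step replaced by a vertex-cut analog that respects property \ref{rvcs-sinks}. Set $\tau = c_\tau \eps^2 \vc / (k \log n)$ and $\Delta = c_\Delta k \log n / \eps^2$ for a sufficiently small constant $c_\tau > 0$ and sufficiently large constant $c_\Delta > 0$. The algorithm I have in mind has four stages: (i) importance sample each vertex weight to a multiple of $\tau$ (dropping any vertex rounded to $0$ from all future outgoing edges); (ii) for every $v \in V'$ add a fresh auxiliary vertex $a_v$ of weight $\eps \vc / (2k)$ together with edges $(r, a_v)$ and $(a_v, v)$, so that each $a_v$ is an out-neighbor of $r$ in $\sparseG$ (hence lies outside $\sparseV'$) and so that for any $S \subseteq V'$ with $v \in S$ the vertex $a_v$ lies in the vertex in-boundary of $S$; (iii) divide all weights by $\tau$ to obtain integers and then truncate at $c_w k \log n / \eps^2$; (iv) iteratively, for each $v \in V'$ with current unweighted in-degree $> \Delta$, set $w(v) \leftarrow 0$ and remove its outgoing edges. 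Step (iv) is the vertex-cut analog of contraction: a weight-$0$ vertex with no outgoing edges is inert for cut computations, and property \ref{rvcs-zero} is automatically maintained.

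The high-probability concentration claim for cut weights (property \ref{rvcs-cut-weights}) carries over essentially verbatim from the edge case. Let $f(S)$ denote the original vertex in-cut weight of $S \subseteq V'$ in $G$ and $g(S)$ its value after stage (i). Then $g(S)$ is a sum of independent contributions, each bounded by $\tau$, with expectation $f(S)$. The additive Chernoff inequality in Footnote \ref{footnote:additive-chernoff} with $\gamma = \eps \vc \sizeof{S} / (2k)$ gives
\begin{align*}
  \probof{\absvof{g(S) - f(S)} \geq \eps f(S) + \gamma}
  \leq 2 e^{-\eps \gamma / \tau}
  = 2 n^{-c_0 \sizeof{S}},
\end{align*}
for a constant $c_0$ under our control via $c_\tau$. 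Union bounding over all subsets of each size gives uniform concentration. The auxiliary contribution $(\eps \vc / (2k)) \sizeof{S}$ added in stage (ii) absorbs the additive lower-tail error and accumulates into the upper tail, producing exactly the bounds in \ref{rvcs-cut-weights}. Stage (iii) preserves these statements (modulo the scale factor $\tau$) since truncation only lowers weights and only below the $c \vc$ threshold irrelevant to \ref{rvcs-cut-weights}.

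The core new verification is \ref{rvcs-small-cut-sets}, i.e., that stage (iv) preserves every small-sink small-cut $S \subseteq V'$ with $\sizeof{S} \leq k$ and in-cut weight $\bigO{\vc}$. Fix such an $S$ and any $v \in S$. Every predecessor of $v$ in $\sparseG$ is either inside $S$ (at most $k - 1$ of them) or lies in the in-boundary of $S$; by \ref{rvcs-cut-weights} the boundary has rescaled total weight $\bigO{k \log n / \eps^2}$, and any nonzero vertex has integer weight at least $1$, so at most $\bigO{k \log n / \eps^2}$ nonzero vertices lie in the boundary. By \ref{rvcs-zero}, zero-weight vertices have no outgoing edges and hence cannot be predecessors of $v$. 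Thus the unweighted in-degree of $v$ is at most $\Delta$ and $v$ is never zeroed in stage (iv), so $S \subseteq \sparseV'$.

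The main obstacle is enforcing \ref{rvcs-degree} for \emph{every} vertex in $\sparseV$, including those zeroed in stage (iv), without violating \ref{rvcs-sinks}. In the edge case, contracting high-degree vertices into $r$ eliminates them entirely; in the vertex case a vertex $v \in V'$ must remain in $\sparseV'$, so stage (iv) only zeros $w(v)$ and drops its outgoing edges while leaving incoming edges intact. I plan to handle this with a separate bookkeeping pass that trims the incoming edges of every already-zeroed vertex down to $\Delta$ arbitrary edges: since a zero-weight vertex without outgoing edges never appears in any vertex boundary and hence never contributes to any in-cut, these deletions preserve every rooted cut in $\sparseV'$ exactly. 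The overall procedure runs in randomized linear time because each vertex is zeroed at most once and each edge is examined a constant number of times.
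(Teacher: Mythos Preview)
Your concentration argument through stage (iii) tracks the paper closely and is fine. The gap is in stage (iv), where you zero the weight of a high in-degree vertex $v$ and delete its \emph{outgoing} edges. This can violate property \ref{rvcs-cut-weights} for sets $S$ that do \emph{not} contain $v$: if $v$ had an edge into $S$, then $v$ was part of the vertex in-boundary of $S$ and contributed its (possibly large) weight to the in-cut; once you delete $v$'s outgoing edges, that contribution vanishes. Concretely, take a singleton $S = \{u\} \subseteq V'$ whose only in-neighbor in $G$ is such a $v$ with $w(v) \approx \vc$: after your stage (iv) the only surviving in-boundary of $S$ is the auxiliary $a_u$ of weight $\eps \vc / (2k)$, far below $\epsless f(S)$. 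Your bookkeeping pass has a symmetric problem: trimming incoming edges to a zeroed $v$ can drop vertices from the in-boundary of sets $S$ that \emph{contain} $v$, again lowering their in-cut; your justification that ``a zero-weight vertex without outgoing edges never appears in any vertex boundary'' only addresses sets not containing $v$.

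The paper's fix for high in-degree vertices is different and avoids both issues: it replaces all incoming edges to $v$ with a single edge $(r,v)$, leaving $v$'s weight and outgoing edges untouched. Thus the in-cut of every $S$ with $v \notin S$ is preserved exactly, while any $S$ with $v \in S$ simply ceases to be a valid sink component (since now $v \in \outneighbors{r}{\sparseG}$). The in-degree argument you already gave then shows that no ``good'' $S$ (size $\leq k$, in-cut $\leq \bigO{\vc}$) ever contains such a $v$, which is exactly \ref{rvcs-small-cut-sets}. This does mean the paper does not literally enforce $\sparseV' = V'$ as a hard equality; effectively \ref{rvcs-sinks} functions as $\sparseV' \subseteq V'$, with \ref{rvcs-small-cut-sets} supplying the containment needed in the other direction for the sets that matter. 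Your attempt to enforce the equality literally is what pushed you toward an operation that damages \ref{rvcs-cut-weights}.
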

\begin{proof}
  Consider the following randomized algorithm applied to the input
  graph $G$.
  \begin{quote}
    \begin{steps}
    \item \labelstep{rvc-first-step} Let
      $\tau = c_{\tau} \eps^2 \vc / k \log{n}$ and let
      $\Delta = c_{\Delta} k \log{n} / \eps^2$, where $c_{\tau} > 0$
      is a sufficiently small constant and $c_{\Delta} > 0$ is a
      sufficiently large constant.
    \item \labelstep{rvc-importance-sample} Important sample each vertex
      weight to be a discrete multiple of $\tau$.
    \item \labelstep{rvc-extra-vertices} For each vertex $v$,
      introduce an auxiliary vertex $a_v$ with weight
      $\eps \vc / 2 k$. Add edges from the $r$ to $a_v$, and from
      $a_v$ to $v$.  %
      \commentitem{Decreasing $c_{\tau}$ and $\eps$ as needed, we
        assume that $\vc$ and $\eps \vc / 2k$ are multiples of
        $\tau$}.
    \item Remove all outgoing edges from any vertex with weight $0$.
    \item \labelstep{rvc-rescale} Scale down all vertex weights by
      $\tau$ (which makes them integers).
    \item \labelstep{rvc-truncate} Truncate all vertex weights to be
      at most $c_{w} k \log{n} / \eps^2$ for a sufficiently large
      constant $c_w > 0$.
    \item \labelstep{rvc-contract} For all $v$ with unweighted
      in-degree at least $\Delta$, replace all incoming edges to $v$
      with a single edge from $r$.
    \end{steps}
  \end{quote}
  Let $\sparseG$ be the graph obtained at the end of the algorithm
  above. Properties \ref{rvcs-root} through \ref{rvcs-zero} follow
  directly from the construction. The remaining proof is dedicated to
  proving the high probability bounds of \ref{rvcs-cut-weights} and
  \ref{rvcs-small-cut-sets}. We first show that the initial steps
  \refstep{rvc-first-step} to \refstep{rvc-extra-vertices} -- before
  rescaling -- preserves the minimum weight rooted vertex-cut
  approximately in the sense of \ref{rvcs-cut-weights} (sans
  scaling). We then analyze the remaining steps.  For each set $S$,
  let $f(S)$ denote the weight of the vertex in-cut of $S$. Let $g(S)$
  denote the randomized weight of the vertex in-cut after step
  \refstep{rvc-importance-sample}. Let $h(S)$ denote the randomized
  weight of the vertex in-cut after step \refstep{rvc-extra-vertices}.
  \begin{claims}
  \item
    \labelclaim{rvc-concentration}
    With high probability, for all $S \subseteq V$, we have
    \begin{align*}
      \absvof{g(S) - f(S)} \leq \eps \f{S} + \frac{\eps \vc \sizeof{S}}{2k}.
    \end{align*}
  \end{claims}
  The claim and proof are similar to \refclaim{rec-concentration} in
  the proof of \reflemma{apx-rec-sparsification}. The claim consists
  of an upper bound and a lower bound on $g(S)$ for all $S$ and we
  first show the lower bound holds with high probability.  Fix any set
  $S$.  $g(S)$ is an independent sum with expected value $f(S)$ and
  where each term in the sum is nonnegative and varies by at most
  $\tau$. Concentration bounds (see footnote
  \ref{footnote:additive-chernoff} on page
  \pageref{footnote:additive-chernoff}) imply that for any
  $\gamma \geq 0$, we have
  \begin{align*}
    \probof{g(S) \leq \epsless  f(S) - \gamma} \leq e^{- \eps \gamma /
    \tau}
    =                           %
    n^{- \gamma k \log{n} / c_{\tau} \eps \vc},
  \end{align*}
  In particular, for
  $\gamma = \eps \vc \sizeof{S} / 2 k$, the RHS is at most
  \begin{math}
    n^{- c_0 \sizeof{S}}
  \end{math}
  where $c_0 > 0$ is again a constant under our control (via
  $c_{\tau}$). For sufficiently large $c_0$, we can take the union
  bound over all sets $S$, establishing the high probability lower
  bound. The high probability upper bound follows by a symmetric
  argument.

  Now we analyze the vertex $r$-cuts after step
  \refstep{rvc-extra-vertices}.
  Recall that for $S \subseteq V$, $h(S)$ denotes the weight
  of the in-cut of $S$ after adding auxiliary vertices in step
  \refstep{rvc-extra-vertices}.
  \begin{claims}
  \item \labelclaim{rvc-padded-concentration} For all
    $S \subseteq V'$, we have
    \begin{align*}
      \epsless \f{S} \leq h(S) \leq \epsmore \f{S} + \frac{\eps \vc\sizeof{S}}{k}.
    \end{align*}
  \end{claims}
  This claim and its proof is similar to
  \refclaim{rec-padded-concentration} in
  \reflemma{apx-rec-sparsification}.  We have
  $h(S) = g(S) + \eps \vc \sizeof{S} / 2 k$ for all
  $S \subseteq V'$.  The additive factor of
  $\eps \vc \sizeof{S} / 2 k$ combine with the high-probability
  additive error in \refclaim{rec-concentration} to establish the
  claim.

  We point out that \refclaim{rvc-padded-concentration} implies that,
  with high probability after step \refstep{rvc-extra-vertices}, the
  weights of all the vertex $r$-cuts are preserved the approximate
  sense described by \ref{rvcs-cut-weights} (without the scaling).
  Henceforth we assume that the high probability event of
  \refclaim{rvc-padded-concentration} holds.  Now, after step
  \refstep{rvc-extra-vertices}, all the weights are integer multiples
  of $\tau$. We have $\vc / \tau = \bigO{k \log{n} / \eps^2}$.  After
  scaling down by $\tau$ in step \refstep{rvc-rescale}, we still
  preserve the $r$-cuts per property
  \ref{rvcs-cut-weights}. Truncating weights in $\sparseG$ to
  $\bigO{\vc / \tau}$ decreases the weight of some cuts, but to no
  less than $\bigO{\vc / \tau}$ (which maps to $\bigO{\vc}$ when
  rescaled back to the scale of $G$). Removing the outgoing edges of
  vertices with weight $0$ also has no impact on the weight of any
  vertex $r$-cut. The final step adding edges from $r$ only eliminates
  some of the vertex $r$-cuts from consideration and does not impact
  the weight of the remaining vertex $r$-cuts. This establishes
  \ref{rvcs-cut-weights}.

  It remains to prove \ref{rvcs-small-cut-sets} and in particular we
  must show that it is not impacted by the last step,
  \refstep{rvc-contract}.  Recall that step \refstep{rvc-contract}
  replaces the incoming edges to any vertex $v$ with unweighted
  in-degree greater than $\Delta = \bigO{k \log{n} / \eps^2}$ with a
  single edge in $r$.  In particular, this edge places $v$ in
  $\outneighbors{r}$ and destroys all $r$-cuts where the sink
  component contains $v$.  Let $\Sink \subset V'$ be the sink
  component of a vertex $r$-cut in $G$ where the capacity of the cut
  is at most $\vc$, and $\sizeof{\Sink} \leq k$.  We want to show that
  all vertices in $\Sink$ have in-degree less than $\Delta$, in which
  case the extra edges in \refstep{rvc-contract} have no impact on
  $\Sink$.  The vertex in-cut induced by $\Sink$ has weight at most
  $(1 + 2 \eps) \vc / \tau$ in the randomized graph before
  \refstep{rvc-contract} (per \ref{rvcs-cut-weights}).  Fix any
  $v \in \Sink$ and consider the edges going into $v$. At most $k - 1$
  of those edges can come from another vertex in $\Sink$, since
  $\Sink$ has at most $k$ vertices. The remaining edges must be from
  vertices in the vertex in-cut of $\Sink$.  Each of these vertices
  have weight at least $1$, and by \ref{rvcs-cut-weights} the in-cut
  has weight at most $(1 + 2 \eps) \vc / \tau$, so there are at most
  $\bigO{\vc / \tau}$ of these vertices.  This gives a maximum total
  of less than $\Delta$ edges incident to $v$, as desired.  In
  conclusion, for any vertex $v$ that is the endpoint to at least
  $\Delta$ edges, it is safe to replace all of $v$'s incoming edges
  with a single edge from the root, without violating
  \ref{rvcs-small-cut-sets}. This establishes
  \ref{rvcs-small-cut-sets} and completes the proof.
\end{proof}

\paragraph{Rooted vertex connectivity for small sink components.}
This section presents an approximation algorithm for rooted vertex
connectivity for the particular setting where the sink component of
the minimum rooted cut is small.  The algorithm is similar to the
algorithms presented in \reflemma{local-apx-rec} and
\reflemma{apx-rec-small-sink} for rooted edge connectivity, and has
the same inspirations (from \cite{forster+20,cq-simple-connectivity})
and motivations. As with \reflemma{apx-rec-small-sink}, the local
algorithm presented here is customized to take full advantage of the
properties of the graph produced by the sparsification lemma,
\reflemma{apx-rvc-sparsification}, and is a deterministic algorithm
with better dependency on $k$ and $\eps$ compared to previous
algorithms.

\begin{lemma}
  \labellemma{local-apx-rvc} Let $G = (V,E)$ be a directed graph with
  positive vertex weights.  Let $r \in V$ be a fixed root vertex.  Let
  $\eps \in (0,1)$, $\vc > 0$ and $k \in \naturalnumbers$ be given
  parameters.  There is a randomized linear time Monte Carlo algorithm
  that, with high probability, produces a deterministic data structure
  that supports the following query.

  For
  $t \in V' \defeq V \setminus \parof{\setof{r} \cup
    \outneighbors{r}}$, let $\vc_{t,k}$ denote the weight of the
  minimum $(r,t)$-vertex cut such that the sink component has at most
  $k$ vertices. Given $t \in V'$, deterministically in
  $\bigO{k^3 \log{n}^2 / \eps^4}$ time, the data structure either (a)
  returns the sink component of a minimum $(r,t)$-vertex cut of weight
  at most $(1 + \eps) \vc_{t,k}$, or (b) declares that
  $\vc_{t,k} > \vc$.
\end{lemma}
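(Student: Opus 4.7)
The plan is to mirror the proof of \reflemma{local-apx-rec}, substituting the vertex-capacitated sparsification \reflemma{apx-rvc-sparsification} for its edge counterpart and handling vertex capacities via the standard vertex-splitting transformation. First I would apply \reflemma{apx-rvc-sparsification} to $G$ with root $r$ and parameters $\vc$, $k$, and $c \eps$ for a sufficiently small constant $c$, producing a graph $\sparseG$ with integer vertex weights bounded by $\bigO{k \log{n} / \eps^2}$, at most $\bigO{k \log{n} / \eps^2}$ incoming edges per vertex, and auxiliary vertices $a_v$ of rescaled weight $\bigOmega{\log{n} / \eps}$ providing every $v \in \sparseV'$ with a two-hop path $r \to a_v \to v$. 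Conditioning on the high-probability event of \ref{rvcs-cut-weights} and \ref{rvcs-small-cut-sets}, every vertex $r$-cut of weight at most $\vc$ whose sink component has at most $k$ vertices is preserved in $\sparseG$ up to a multiplicative $(1 + O(\eps))$ factor; the data structure simply stores $\sparseG$ together with its vertex-split reverse.

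On a query $t$, I would form the vertex-split reverse of $\sparseG$ in the usual way (each $v$ becomes $\vin, \vout$ joined by an internal edge of capacity $w(v)$, and each reversed original edge enters $\vin$ from the corresponding $\uout$) and run a specialized Ford--Fulkerson from $\tout$ to $\rin$. The algorithm stops either once the flow value exceeds $\bigO{k \log{n} / \eps^2}$, in which case it declares $\vc_{t,k} > \vc$ (this threshold maps to weight $\bigO{\vc}$ in $G$ after unscaling), or once no augmenting path remains, in which case it returns the sink side of the induced minimum vertex cut, provided its unscaled weight is at most $(1 + \eps/2)\vc$.

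The key refinement, carried over from \reflemma{local-apx-rec}, is to call an auxiliary vertex $a_v$ \emph{saturated} when its internal edge $\auxvin \to \auxvout$ is fully used in the current flow. Whenever the augmenting-path search reaches some $\vout$ whose corresponding $a_v$ is unsaturated, it immediately closes the path via $\vout \to \auxvin \to \auxvout \to \rin$. Since each saturated $a_v$ carries at least $\bigOmega{\log{n} / \eps}$ units of flow while the total flow never exceeds $\bigO{k \log{n} / \eps^2}$, there are at most $\bigO{k / \eps}$ saturated auxiliary vertices at any moment. Each search therefore wanders only among the saturated nodes before reaching $r$ or some unsaturated $a_v$, visiting at most $\bigO{k^2 / \eps^2}$ split-graph edges. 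Multiplying by the $\bigO{k \log{n} / \eps^2}$ iterations of Ford--Fulkerson gives the claimed $\bigO{k^3 \log{n}^2 / \eps^4}$ running time.

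The main obstacle I anticipate is maintaining the saturated-vertex bookkeeping cleanly in the split graph: I need to verify that pushing flow backwards through some $\auxvout \to \auxvin$ reverse edge (which unsaturates $a_v$) does not break the $\bigO{k / \eps}$ bound on saturated vertices across the life of the algorithm, and that the refined search can correctly detect unsaturated auxiliary vertices the moment they are first visited without scanning the entire in-neighborhood of $r$. Everything else --- bounding the number of iterations, converting the final flow value into a vertex cut with the desired unscaled weight, and checking that the returned cut is a $(1 + \eps)$-approximation of $\vc_{t,k}$ whenever $\vc_{t,k} \leq \vc$ --- should follow by essentially the same calculations as in \reflemma{local-apx-rec}, now using the guarantees of \reflemma{apx-rvc-sparsification}.
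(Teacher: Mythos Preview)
Your overall strategy matches the paper's: apply \reflemma{apx-rvc-sparsification}, pass to the reversed split graph, and run Ford--Fulkerson while short-cutting through the auxiliary vertices. The gap is in your per-search edge bound. You assert that ``each search therefore wanders only among the saturated nodes \ldots\ visiting at most $O(k^2/\eps^2)$ split-graph edges,'' carrying the edge-connectivity argument of \reflemma{local-apx-rec} over verbatim. In the split graph this does not follow from the $O(k/\eps)$ bound on saturated auxiliary vertices alone.

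The difficulty is that the search visits many in-nodes $\vin$ whose associated $a_v$ is \emph{not} saturated. From a saturated $\vout$ the search fans out along up to $O(k\log n/\eps^2)$ edges (this is the out-degree bound inherited from \ref{rvcs-degree} after reversal), so already the $O(k/\eps)\cdot O(k/\eps)$ count is not the right one. More importantly, whenever the internal edge $(\vin,\vout)$ is itself saturated, the search from $\vin$ must follow residual reverse edges back to various $\uout$'s, and neither the number of such $\vin$'s nor their residual out-degrees is controlled by the count of saturated auxiliary vertices. The paper handles this by introducing a second notion of saturation for in-nodes (a $\vin$ is declared saturated when $(\vin,\vout)$ is saturated while $\vout$ is not) and proving a separate charging claim: the total residual out-degree summed over all saturated $\vin$'s is at most the current flow value, because every reverse edge leaving such a $\vin$ corresponds to a unit of flow that, since $\vout$ is unsaturated, must have exited directly via the auxiliary edge to $\rin$. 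This extra argument is what yields the per-search bound of $O(k^2\log n/\eps^2)$ edges, and it is the missing idea in your outline. Your closing arithmetic, $O(k^2/\eps^2)\cdot O(k\log n/\eps^2)=O(k^3\log n/\eps^4)$ rather than the stated $O(k^3\log^2 n/\eps^4)$, is a symptom of the same oversight. The obstacle you flagged about reverse flow unsaturating $a_v$ is benign by comparison; the real work is bounding the search once internal edges of original vertices saturate.
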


\begin{proof}
  We first apply \reflemma{apx-rvc-sparsification} to $G$ with root
  $r$ and parameters $\vc$, $k$, and $c \eps$ for a constant $c > 0$
  sufficiently small. This produces a vertex capacitated graph
  $\sparseG= (V_0,E_0)$ with $V \subset V_0$.  We highlight the
  features that we leverage.  All new vertices (in $V_0 \setminus V$)
  are in $\outneighbors{r}{\sparseG}$; that is, $V'$ equals
  \begin{math}
    \sparseV' \defeq \sparseV \setminus \parof{\setof{r} \cup
      \outneighbors{r}{\sparseG}}.
  \end{math}
  Put alternatively, none of the new vertices are in the sink
  component of any $r$-cut.  The vertex weights are integers between
  $0$ and $\bigO{k \log{n} / \eps^2}$.  Every vertex has unweighted
  in-degree at most $\bigO{k \log{n} / \eps^2}$.  Every vertex with
  weight $0$ has no outgoing edges.

  With high probability, we have the following guarantees on the
  vertex $r$-cuts of $\sparseG$.  The vertex weights in $\sparseG$ are
  scaled so that a weight of $\vc$ in $G$ corresponds to weight
  $\bigO{k \log{n}/ \eps^2}$ in $\sparseG$.  Modulo scaling, every
  vertex $r$-cut in $\sparseG$ has weight no less than the minimum of
  its weight in $G$ and $2 \vc$.  Additionally, modulo scaling, for
  every vertex $r$-cut in $G$ with capacity at most $\vc$ and at most
  $k$ vertices in the sink component, the corresponding vertex cut in
  $\sparseG$ has weight at most a $c_0 \eps \vc$ additive factor
  larger than in $G$, for any desired constant $c_0 > 0$.  We consider
  the algorithm to fail if the cuts are not preserved in the sense
  described above.

  Given $t \in V$, the data structure will search for a small
  $(r,t)$-cut in $\sparseG$ via a customized, edge-capacitated flow
  algorithm. This algorithm may or may not return the sink component
  of $(r,t)$-cut. If the search does return a sink component, and the
  corresponding vertex in-cut in $\sparseG$ has weight that, upon
  rescaling back to the scale of the input graph $G$, is at most
  $(1 + \eps / 2) \vc$, the data structure returns it. Otherwise the
  data structure indicates that $\vc_{t,k} > \vc$.

  Proceeding with the flow algorithm, let $\revG$ be the reverse of
  $\sparseG$, and let $\splitG$ be the standard ``split-graph'' of
  $\revG$ modeling vertex capacities with edge capacities.  We recall
  that the split graph splits each vertex $v$ into an auxiliary
  ``in-vertex'' $\vin$ and an auxiliary ``out-vertex'' $\vout$.  For
  each $v$ there is a new edge $(\vin,\vout)$ with capacity equal to
  the vertex capacity of $v$.  Each edge $(u,v)$ is replaced with an
  edge $(\uout,\vin)$ with capacity\footnote{Usually, this edge is set
    to capacity $\infty$, but either the weight of $u$ or the weight
    of $v$ are also valid.}  equal to the vertex capacity of $u$.
  Every $(r,t)$-vertex cut in $\sparseG$ maps to a $(\tout,\rin)$-edge
  cut in $\revG$ with the same capacity.  Any $(\tout,\rin)$-edge
  capacitated cut maps to a $(r,t)$-vertex cut in $\sparseG$ (with
  negligible overhead in the running time). Now, recall that for each
  $v \in V'$, the sparsification procedures introduces an auxiliary
  path $(r,a_v,\rin)$ where $a_v$ is was given weight
  $\bigTheta{\eps \vc / k}$. It is convenient to replace the
  corresponding auxiliary path $(\vout, a_v^-, a_v^+, \rin)$ in
  $\revG$ with a single edge $(\vout,\rin)$ with capacity equal to the
  weight of $a_v$.  This does not effective the weight of the minimum
  $(\tout,\rin)$-edge cut for any $t \in V'$.  This adjustment can be
  easily made within the allotted preprocessing time.

  In this graph, given $t \in V'$, we run a specialization of the
  Ford-Fulkerson algorithm \cite{ford-fulkerson} that either computes
  a minimum $(\tout,\rin)$-cut or concludes that the minimum
  $(\tout,\rin)$-cut is at least $\bigO{k \log{n} / \eps^2}$ (which
  corresponds to $\bigO{\vc}$ in $G$) after
  $\bigO{k \log{n} / \eps^2}$ iterations.  To briefly review, each
  iteration in the Ford-Fulkerson algorithm searches for a path from
  $t$ to $r$ in the residual graph. If such a path is found, then it
  routes one unit of flow along this path, and updates the residual
  graph by reversing (one unit capacity) of each edge along the path.
  After $\ell$ successful iterations we have a flow of size $\ell$.
  If, after $\ell$ iterations, there is no path in the residual graph
  from $\tout$ to $\rin$, then the set of vertices reachable from $t$
  gives a minimum $(\tout,\rin)$-cut of size $\ell$.  Observe that
  updating the residual graph along a $(\tout,\rin)$-path preserves
  the weighted in-degree and out-degree of every vertex except $\tout$
  and $\rin$. The weighted out-degree of $\tout$ decreases by 1 and
  the weighted in-degree of $\rin$ changes by $1$. Moreover, updating
  the residual graph along a path increases the unweighted out-degree
  of any vertex by at most one, since a path contains at most one edge
  going into any single vertex.  Since every vertex initially has
  unweighted out-degree at most $\bigO{k \log{n} / \eps^2}$ in $\revG$
  (reversing the upper bound on the unweighted in-degrees in
  $\sparseG$), and the flow algorithm updates the residual graph along
  at most $\bigO{k \log{n} / \eps^2}$ paths before terminating, the
  maximum unweighted out-degree over all vertices never exceeds
  $\bigO{k \log{n} / \eps^2}$.

  Within the Ford-Fulkerson framework, we give a refined analysis that
  takes advantages of the auxiliary $(\vout,\rin)$ edges.  Call an
  out-vertex $\vout$ \defterm{saturated} if the auxiliary edge
  $(\vout,\rin)$ is saturated; that is, if $(\vout,\rin)$ is not in
  the residual graph.  Call an in-vertex $\vin$ \defterm{saturated} if
  the edge $(\vin,\vout)$ is saturated and $\vout$ is not saturated.
  (A vertex $\vout$ or $\vin$ is called \defterm{unsaturated} if it is
  not saturated.)  We modify the search for an augmenting path to
  effectively end when we first visit an unsaturated vertex $\vout$ or
  an unsaturated $\vin$.  If we visit an unsaturated $\vin$, then we
  automatically complete a path to $\rin$ via $\vout$.  If we find an
  unsaturated $\vout$, then we automatically complete a path to $\rin$
  via the edge $(\vout,\rin)$. It remains to bound the running time of
  this search. We first bound the number of saturated $\vout$'s.

  \begin{claims}
  \item \labelclaim{saturated-vouts} There are at most
    $\bigO{k / \eps}$ saturated $\vout$'s.
  \end{claims}
  Indeed, each saturated $\vout$ implies $\bigO{\log{n} / \eps}$ units
  of flow along $(\vout,\rin)$, and the flow is bounded above
  $\bigO{k \log{n} / \eps^2}$.

  Note that \refclaim{saturated-vouts} also implies there are at most
  $\bigO{k / \eps}$ $\vin$'s such that $\vout$ is saturated. The next
  claim bounds the total out-degree of saturated $\vin$'s.

  \begin{claims}
  \item \labelclaim{saturated-vin-out-degrees} The sum of out-degrees
    of saturated $\vin$'s is at most the amount of flow that has been
    routed to $\rin$.
  \end{claims}

  Indeed, the out-degree of a $\vin$ in the residual graph is bounded
  above by the amount of flow through $(\vin,\vout)$, since initially
  $(\vin,\vout)$ is the only outgoing edge from $\vin$.  Recall that
  if $\vin$ is saturated, then by definition $\vout$ is unsaturated.
  As long as $\vout$ is unsaturated, each unit of flow through
  $(\vin,\vout)$ goes directly to $\rin$ via the edge $(\vout,\rin)$,
  and can be charged to the total flow.

  We now apply the above two claims to bound the total running time
  for each search, as follows.
  \begin{claims}
  \item \labelclaim{local-rvc-search-length} Every (modified) search
    for an augmenting path traverses at most
    $\bigO{k^2 \log{n} / \eps^2}$ edges.
  \end{claims}
  We first observe that every vertex visited in the search, except the
  unsaturated vertex terminating the search, is either (a) a saturated
  $\vin$, (b) a saturated $\vout$, or (c) an unsaturated $\vin$ such
  that $\vout$ is saturated.  We will upper bound the number of edges
  traversed in each iteration based on the type of vertex at the
  initial point of that edge.  First, the amount of time spent
  exploring edges leaving (a) a saturated $\vin$ is, by
  \refclaim{saturated-vin-out-degrees}, at most the size of the flow
  at that point, which is at most $\bigO{k \log{n} / \eps^2}$.
  Second, consider the time spent traversing edges leaving either (b)
  a saturated $\vout$ or (c) an unsaturated $\vin$ such that $\vout$
  is saturated. By \refclaim{saturated-vouts}, there are at most
  $\bigO{k / \eps}$ such vertices, and each has out-degree at most
  $\bigO{k \log{n} / \eps^2}$. Thus we spend
  $\bigO{k^2 \log{n} / \eps^2}$ time traversing such edges. All
  together, we obtain an upper bound of $\bigO{k^2 \log{n} / \eps^2}$
  total edges per search.

  \refclaim{local-rvc-search-length} also bounds the running time for
  each iteration. The algorithm runs for at most
  $\bigO{k \log{n} / \eps^2}$ iterations before either finding an
  $(\tout,\rin)$-cut or concluding that the weight of the minimum
  $(\tout,\rin)$-cut, rescaled to the input scale of $G$, is at least
  a constant factor greater than $\vc$.  The total running time
  follows.
\end{proof}

We now present the overall algorithm for finding vertex $r$-cuts with
small sink components. The algorithm combines \reflemma{local-apx-rvc}
with randomly sampling for a vertex $t$ in the sink component of an
approximately minimum $r$-cut.

\begin{lemma}
  \labellemma{apx-rvc-small-sink} Let $G = (V,E)$ be a directed graph
  with positive vertex weights. Let $r \in V$ be a fixed root vertex.
  Let $\eps \in (0,1)$, $\vc > 0$ and $k \in \naturalnumbers$ be given
  parameters.  There is a randomized algorithm that runs in
  $\bigO{m + (n - \outdegree{r}) k^2 \log{n}^3 / \eps^4}$ time and has
  the following guarantee.  If there is a vertex $r$-cut of capacity
  at most $\vc$ and where the sink component has at most $k$ vertices,
  then with high probability, the algorithm returns a vertex
  $(r,t)$-cut of capacity at most $\epsmore \vc$.
\end{lemma}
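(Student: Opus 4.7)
The plan is to mirror the proof of \reflemma{apx-rec-small-sink}. Let $\optSink$ denote the sink component of the minimum vertex $r$-cut subject to $\sizeof{\optSink} \leq k$, and assume its induced vertex in-cut has weight at most $\vc$ (otherwise there is nothing to prove). A key preliminary observation is that $\optSink \subseteq V \setminus \parof{\setof{r} \cup \outneighbors{r}}$: no out-neighbor of $r$ can lie in the sink component of any vertex $r$-cut. This is what makes it sufficient to draw candidate sinks from $V \setminus \parof{\setof{r} \cup \outneighbors{r}}$ instead of from all of $V$, and is precisely the source of the $n - \outdegree{r}$ factor in the claimed running time.

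Suppose first we had a factor-$2$ overestimate $\ell$ with $\sizeof{\optSink} \leq \ell \leq 2 \sizeof{\optSink}$. I would invoke \reflemma{local-apx-rvc} with parameters $\vc$, $\ell$, and $c \eps$ for a sufficiently small constant $c > 0$ to obtain, with high probability, a deterministic data structure that is correct on every query. Conditioning on correctness, I would then sample $\bigO{(n - \outdegree{r}) \log{n} / \ell}$ vertices uniformly at random from $V \setminus \parof{\setof{r} \cup \outneighbors{r}}$. Since $\sizeof{\optSink} \geq \ell / 2$, a standard coupon-collector-style calculation shows that with high probability some sampled $t$ lies in $\optSink$; for such a $t$ we have $\vc_{t,\ell} \leq \vc$, so \reflemma{local-apx-rvc} returns the sink component of a vertex $(r,t)$-cut of weight at most $\epsmore \vc$. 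The algorithm returns the minimum-weight cut across all samples.

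For a fixed $\ell$, each query costs $\bigO{\ell^3 \log{n}^2 / \eps^4}$ by \reflemma{local-apx-rvc}, so the total query cost is $\bigO{(n - \outdegree{r}) \ell^2 \log{n}^3 / \eps^4}$. Since $\ell$ is not known \emph{a priori}, I would try all powers of $2$ between $1$ and $2 k$, at least one of which is a valid factor-$2$ overestimate of $\sizeof{\optSink}$. The sum of the per-guess running times is dominated by the largest $\ell = k$, giving the claimed $\bigO{(n - \outdegree{r}) k^2 \log{n}^3 / \eps^4}$ query term, while the preprocessing inside each call to \reflemma{local-apx-rvc} contributes the $\bigO{m}$ term (the $\log k$ overhead from iterating $\ell$ is absorbed into the other factor's $\log n$'s).

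The only real obstacle is bookkeeping rather than a new idea: we must (i) confirm that sampling from $V \setminus \parof{\setof{r} \cup \outneighbors{r}}$ rather than from all of $V$ still hits $\optSink$ with high probability, and (ii) account for the $\log n$ factors across preprocessing, sampling, and query cost. Point (i) follows immediately from property \ref{rvcs-sinks} of \reflemma{apx-rvc-sparsification} combined with the observation above, and saves the $(n - \outdegree{r})/n$ factor that distinguishes \reftheorem{apx-rvc} from a naive vertex analogue of \reftheorem{apx-ec}. Point (ii) is routine.
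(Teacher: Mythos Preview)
Your proposal is correct and follows essentially the same approach as the paper: invoke \reflemma{local-apx-rvc} with the guessed $\ell$, sample $\bigO{(n-\outdegree{r})\log n/\ell}$ vertices from $V\setminus(\{r\}\cup\outneighbors{r})$, query each, and enumerate $\ell$ over powers of two up to $2k$. The only minor remark is that your point~(i) does not actually need property~\ref{rvcs-sinks} of \reflemma{apx-rvc-sparsification}; the containment $\optSink\subseteq V\setminus(\{r\}\cup\outneighbors{r})$ holds directly in $G$ because any out-neighbor of $r$ remains reachable from $r$ after deleting a vertex set avoiding both, so it can never lie in a sink component.
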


\begin{proof}
  Let $\optSink$ be the sink component of the minimum vertex $r$-cut
  subject to $\sizeof{\optSink} \leq k$. Assume the capacity of the
  vertex in-cut of $\optSink$ is at most $\vc$ (since otherwise the
  algorithm makes no guarantees). Let
  $V' = V \setminus (\setof{r} \cup \outneighbors{r})$ and note that
  $\sizeof{V'} = n - 1 - \outdegree{r}$.

  Suppose we had a factor-2 overestimate
  $\ell \in \bracketsof{\sizeof{\optSink}, 2 \sizeof{\optSink}}$ of
  the number of vertices in $\optSink$. We apply
  \reflemma{local-apx-rvc} with upper bounds $\vc$ on the size of the
  cut and $\ell$ on the number of vertices in the sink component,
  which returns a data structure that, with high probability, is
  correct for all queries. Let us assume the data structure is correct
  (and otherwise the algorithm fails). We randomly sample
  $\bigO{(n - \outdegree{r}) \log{n} / \ell}$ vertices from $V'$. For
  each sampled vertex $t$, we query the data structure from
  \reflemma{local-apx-rvc}. Observe that if $t \in \optSink$, then the
  query for $t$ returns an $r$-cut with capacity at most
  $\epsmore \vc$. With high probability we sample at least one vertex
  from $\optSink$, which produces the desired $r$-cut. By
  \reflemma{local-apx-rvc}, the total running time to serve all
  queries is
  $\bigO{m + (n - \outdegree{r}) \ell^2 \log{n}^3 / \eps^4}$.

  A factor-2 overestimate $\ell$ can be obtained by enumerating all
  powers of $2$ between $1$ and $2k$. One of these choices of $\ell$
  will be accurate and produce the minimum $r$-cut with high
  probability. Note that the sum of
  $\bigO{(n - \outdegree{r}) \ell^2 \log{n}^3 / \eps^4}$ over this
  range of $\ell$ is dominated by the maximum $\ell$. The claimed
  running time follows.
\end{proof}


\paragraph{Rooted vertex connectivity for large sink components.}

The third and final part (before the overall algorithm) is an
approximation for the rooted vertex cut that is well-suited for large
sink components.
\begin{lemma}
  \labellemma{apx-rvc-big-sink}
  Let $G = (V,E)$ be a directed graph with positive vertex
  weights. Let $r \in V$ be a fixed root vertex. Let $\eps \in (0,1)$,
  $\vc > 0$, and $k \in \naturalnumbers$ be given parameters. There is
  a randomized algorithm that runs in
  \begin{math}
    \apxO{m + (n - \outdegree{r}) \parof{n / \eps^2 + n^{1.5} / k}}
  \end{math}
  time and has the following guarantee. If there is a vertex $r$-cut
  of capacity at most $\vc$ and where the sink component has at most
  $k$ vertices, then with high probability, the algorithm returns a
  vertex $(r,t)$-cut of capacity at most $\epsmore \vc$.
\end{lemma}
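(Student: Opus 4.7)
The plan is to follow the structure of \reflemma{apx-rec-big-sink}, substituting the vertex-capacitated analogs of the sparsification lemma and the max-flow subroutine. First I would invoke \reflemma{apx-rvc-sparsification} on $G$ with parameters $k$, $\vc$, and $c\eps$ for a sufficiently small constant $c > 0$, producing a sparsified graph $\sparseG = (\sparseV, \sparseE)$ with $\sizeof{\sparseE} = \apxO{nk/\eps^2}$ (by \ref{rvcs-degree}) and $\sparseV' = V'$, where $V' = V \setminus \parof{\setof{r} \cup \outneighbors{r}}$ (by \ref{rvcs-sinks}). Let $\optSink$ denote the sink component of the minimum vertex $r$-cut with $\sizeof{\optSink} \leq k$. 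Properties \ref{rvcs-cut-weights} and \ref{rvcs-small-cut-sets} ensure that, with high probability, $\optSink \subseteq \sparseV'$ and the rescaled weight of the vertex in-cut of $\optSink$ in $\sparseG$ is at most $\epsmore \vc$.

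Next I would randomly sample $\apxO{(n - \outdegree{r}) / k}$ vertices from $V'$; under the assumption $\sizeof{\optSink} = \bigTheta{k}$ (as will be ensured by the outer algorithm guessing $k$ up to a factor of $2$ by doubling), at least one sampled $t$ lands in $\optSink$ with high probability. For each sampled $t$, I would compute the minimum $(r,t)$-vertex cut in $\sparseG$ using the $\apxO{m + n^{1.5}}$-time vertex-capacitated max-flow algorithm of \cite{brand+20} (or equivalently, the edge-capacitated algorithm of \cite{brand+21} applied to the split graph). I would then return the lightest cut found, rescaled back to $G$. Since $\optSink$ itself is an $(r,t)$-vertex cut whenever $t \in \optSink$, the minimum $(r,t)$-cut computed in this iteration has weight at most the rescaled weight of the in-cut of $\optSink$, which is at most $\epsmore \vc$ by the preservation guarantees above.

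For the running time, sparsification takes $\apxO{m}$ time and produces a graph with $\apxO{nk/\eps^2}$ edges on $\bigO{n}$ vertices. Each vertex-capacitated max-flow computation then runs in $\apxO{nk/\eps^2 + n^{1.5}}$ time, and summing over the $\apxO{(n - \outdegree{r}) / k}$ sampled sinks yields a total of
\begin{align*}
  \apxO{m + (n - \outdegree{r}) \parof{n/\eps^2 + n^{1.5}/k}},
\end{align*}
as claimed. The only nontrivial step — beyond the routine sampling and flow-invocation bookkeeping — is the appeal to \reflemma{apx-rvc-sparsification} to ensure that $\optSink$ survives sparsification as a valid sink in $\sparseV'$ with its weight approximately preserved; this is precisely the content of properties \ref{rvcs-small-cut-sets} and \ref{rvcs-cut-weights} under the hypotheses that $\sizeof{\optSink} \leq k$ and the cut capacity is $\bigO{\vc}$.
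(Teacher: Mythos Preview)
Your proposal is correct and follows essentially the same approach as the paper: apply \reflemma{apx-rvc-sparsification}, sample $\apxO{(n-\outdegree{r})/k}$ vertices from $V'$, run the $\apxO{m+n^{1.5}}$ vertex-capacitated flow of \cite{brand+20} for each, and return the best cut. You are in fact slightly more careful than the paper in noting that the sampling needs $\sizeof{\optSink} = \bigTheta{k}$ (supplied by the outer doubling loop), a point the paper's statement of this lemma glosses over even though the analogous edge-cut lemma (\reflemma{apx-rec-big-sink}) makes it explicit.
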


\begin{proof}
  Let $\optsink$ be the sink component of the minimum $r$-cut subject
  to $\sizeof{\optsink} \leq k$. We assume the capacity of the $r$-cut
  induced by $\optsink$ is at most $\vc$. (Otherwise the output is not
  well-defined.) Let
  $V' = V \setminus \parof{\setof{r} \cup \outneighbors{r}}$ and note
  that $\sizeof{V'} < n - \outdegree{r}$.

  We apply \reflemma{apx-rvc-sparsification} to produce the graph
  $\sparseG$. \reflemma{apx-rvc-sparsification} succeeds with high
  probability and for the rest of the proof we assume it was
  successful. (Otherwise the algorithm fails.)  We sample
  $\bigO{ \parof{n - \outdegree{r}} \log{n} / k}$ vertices $t \in
  V'$. For each sampled $t$, we compute the minimum $(r,t)$-vertex cut
  in $\sparseG$. With high probability, some $t$ will be drawn from
  the sink component of the true minimum $r$-cut, in which case the
  minimum $(r,t)$-cut in $\sparseG$ gives an $\epsmore$-approximate
  $r$-cut in $G$ (by \reflemma{apx-rvc-sparsification}).  We use
  $\vctime{m}{n} = \apxO{m + n^{1.5}}$ \cite{brand+20}. By
  \reflemma{apx-rvc-sparsification}, we have
  $m = \bigO{n k \log{n} / \eps^2}$. This gives the total running
  time.
\end{proof}

\paragraph{Approximating the rooted and global vertex connectivity.}

We now combine the two parameterized approximation algorithms for
rooted vertex connectivity to give the following overall algorithm for
rooted edge connectivity and establish \reftheorem{apx-rvc}. We
restate \reftheorem{apx-rvc} for the sake of convenience.

\ApxRVC*

\begin{proof}
  The high-level approach is similar to \reftheorem{apx-ec} for edge
  connectivity -- we are balancing two algorithms for rooted vertex
  connectivity, where one is better suited for small sink components,
  the second is better suited for large sink components. Both leverage
  the randomized sparsification lemma.  As before, with
  polylogarithmic overhead, we can assume access to values $\vc$ and
  $k$ that are within a factor 2 of the weight of the minimum $r$-cut
  and the number of vertices in the sink component of the minimum
  $r$-cut, respectively.  For a fixed choice of $k$ and $\vc$, we
  run the faster of two randomized algorithms, both of which would
  succeed with high probability when $k$ and $\vc$ are
  (approximately) correct.  The first option, given by
  \reflemma{apx-rvc-small-sink}, runs in
  $\apxO{(n - \outdegree{r}) k^2 / \eps^3}$.  The second option, given
  by \reflemma{apx-rvc-big-sink}, runs in
  $\apxO{n(n-\outdegree{r}) / \eps^2 + n^{1.5} (n - \outdegree{r}) /
    k}$.  The overall running time is obtained by choosing $k$ to
  balance the running times.  For $k = \eps \sqrt{n}$, we obtain the
  claimed running time.
\end{proof}

Next we use the algorithm for rooted vertex connectivity to obtain an
algorithm for global vertex connectivity and establish
\refcorollary{apx-vc}.  \citet{hrg-00} showed that running times of
the form $(n-\outdegree{r}) T$ for rooted connectivity from a root $r$
imply a randomized $n T$ expected time algorithm for global vertex
connectivity.  \reftheorem{apx-rvc} gives a
$\apxO{m + n (n-\outdegree{r}) / \eps^2}$ running time, so some
modifications have to be made to address the additional $\apxO{m}$
additive factor. We restate \refcorollary{apx-vc} for the sake of
convenience.

\ApxVC*
\begin{proof}
  Let $T = \apxO{n / \eps^2}$.  Let $w: V \to \preals$ denote the
  vertex weights, and let $W = \sum_{v \in V} \weight{v}$ be the total
  weight of the graph. Let $\vc$ denote the weight of the minimum
  global vertex cut. The algorithm samples
  $L = \bigO{W \log{n} / (W - \vc)}$ vertices $r$ in proportion to
  their weight, and -- morally, but not actually -- computes the
  minimum $r$-vertex cut for each sampled vertex $r$ via
  \reftheorem{apx-rvc}. It returns the smallest cut found.

  For the sake of running time, we adjust the algorithm from
  \reftheorem{apx-rvc}. Recall that for a fixed root $r$, and for each
  of a logarithmic number of values for $k$ and $\vc$, the algorithm
  from \reftheorem{apx-rvc} applies \reflemma{apx-rvc-sparsification}
  which reduces the graph to having $\apxO{n k / \eps^2}$ edges and
  rooted connectivity $\apxO{k/\eps^2}$. For fixed $k$ and $\vc$,
  rather than rerun \reflemma{apx-rvc-sparsification} entirely for
  each $r$ we sample, we execute most of it just once for all $r$, and
  make local modifications for each different root $r$. Referring to
  the algorithm given in the proof of
  \reflemma{apx-rvc-sparsification}, observe that the only step that
  directly mentions $r$ is step \refstep{rvc-extra-vertices}, which
  adds auxiliary vertices between $r$ and each other vertex $v$. We
  move this step to the very end of the algorithm. (Here the vertex
  weight of auxiliary vertices is scaled down appropriately.)  It is
  easy to see that the proof of \reflemma{apx-rvc-sparsification}
  still goes through (with minor rearrangement in the argument). The
  advantage is that, over all $L$ roots $r$, we now spend a total of
  $\bigO{m + n L}$ time, rather than $\bigO{m L}$.  Thereafter, the
  rest of the rooted connectivity algorithm takes
  $\apxO{(n-\outdegree{r}) T}$ per root $r$. Note that
  $\apxO{(n-\outdegree{r}) T}$ dominates the $\bigO{n}$ time required
  to complete the sparsification for each root.

  Consider a single root $r$ sampled from $V$ in proportion to its
  weight. The expected running time to compute the minimum $r$-cut is
  \begin{align*}
    \evof{\parof{n - \outdegree{r}} T}
    =                           %
    n T - \frac{T}{W}\sum_{v \in V} \outdegree{v} \weight{v}
    \tago{=}
    n T - \frac{T}{W} \sum_{v \in V} \sum_{x \in \inneighbors{v}} \weight{x}
    \tago{\leq}
    n T \parof{1 - \frac{\vc}{W}}.
  \end{align*}
  Here, in \tagr, $\inneighbors{v}$ denotes the in-neighborhood of
  $v$. The equality is obtained by implicitly interchanging
  sums. \tagr is because for each $v$, the sum
  $\sum_{x \in \inneighbors{v}} \weight{x}$ is the weighted in-degree
  of $v$, and at least $\vc$.  Thus The overall expected running time over
  all the sampled roots is
  \begin{align*}
    \bigO{\prac{W \log{n}}{W - \vc} \evof{\parof{n - \outdegree{r}}
    T}}
    =
    \bigO{\prac{W \log{n}}{W - \vc} \cdot nT \parof{1 -
    \frac{\vc}{W}}}
    =                           %
    \apxO{n^2 / \eps^2}.
  \end{align*}
  Meanwhile, when we sample $\bigO{W \log{n} / (W - \vc)}$ vertices in
  proportion to their weight, then with high probability, at least one
  sampled vertex lies outside the minimum global vertex cut. Such a
  vertex then leads to the minimum global vertex cut with high
  probability.
\end{proof}

\paragraph{$o(mn)$-time approximations for vertex connectivity.}

\providecommand{\optSink}{T^{\star}}
\providecommand{\sparseG}{G_0}  %

We now show how to approximate rooted and global vertex connectivity
in $o(mn)$ time, just as we did previously for edge connectivity. We
point out that the minimum $(s,t)$-edge cut algorithm of \cite{glp-21}
that runs in $\apxO{m^{1.5 - \delta}}$ time for $\delta = 1/128$ is
also an $\apxO{m^{1.5 - \delta}}$ time algorithm for $(s,t)$-vertex
cut by standard reductions. As the ideas here to develop an $o(mn)$
time approximation for vertex connectivity are the same as for edge
connectivity, we restrict ourselves to a sketch.

\begin{lemma}
  \labellemma{sparse-apx-vc} Let $\eps \in (0,1)$, and let $G = (V,E)$
  be a directed graph with polynomially bounded vertex weights.
  Suppose that the minimum $(s,t)$-vertex cut can be computed in
  $\apxO{m^{3/2-\delta}}$ time for a constant $\delta > 0$. For a
  fixed root $r \in V$, a $\epsmore$-approximate minimum vertex
  $r$-cut can be computed with high probability in
  $\apxO{m + (n - \outdegree{r}) m^{1 - 2\delta/3} / \eps^{4/3}}$
  randomized time. A $\epsmore$-approximate minimum global vertex cut
  can be computed with high probability in
  $\apxO{m + n m^{1 - 2 \delta/3} / \eps^{4/3}}$ randomized time.
\end{lemma}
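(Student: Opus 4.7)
The plan is to follow the template of \reflemma{sparse-apx-rec} for the edge case, substituting vertex-cut subroutines. Since vertex weights are polynomially bounded, I can guess (to within a factor of $2$, with polylogarithmic overhead) both the weight $\vc$ of the minimum vertex $r$-cut and the number of vertices $k$ in its sink component. For each guessed pair $(\vc, k)$, I run whichever of two procedures is faster; at the correct guess, one succeeds with high probability. The first option is \reflemma{apx-rvc-small-sink}, which runs in $\apxO{m + (n - \outdegree{r}) k^2 / \eps^4}$ time. The second option (replacing \reflemma{apx-rvc-big-sink}) samples $\apxO{(n - \outdegree{r}) / k}$ vertices $t$ uniformly from $V' = V \setminus \parof{\setof{r} \cup \outneighbors{r}}$ \emph{without} first sparsifying, and for each sampled $t$ computes the minimum $(r,t)$-vertex cut in $\apxO{m^{3/2 - \delta}}$ time via the hypothesized flow algorithm; with high probability some sampled $t$ lies in the true sink component (assumed to have at least $k/2$ vertices), yielding an approximately-optimal cut. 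This second option runs in $\apxO{(n - \outdegree{r}) m^{3/2 - \delta}/k}$ time.

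Balancing the two options at $k = \eps^{4/3} m^{1/2 - \delta/3}$ gives $\apxO{(n - \outdegree{r}) m^{1 - 2\delta/3}/\eps^{4/3}}$, which together with the $\apxO{m}$ preprocessing yields the claimed running time for the rooted case.

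For the global case, the plan is to compose the rooted algorithm with the weight-proportional root-sampling strategy from \refcorollary{apx-vc}: sample $L = \apxO{W/(W - \vc)}$ roots in proportion to vertex weight (where $W$ is the total vertex weight and $\vc$ now denotes the minimum global vertex cut), and invoke the rooted algorithm on each. With high probability some sampled root lies outside the optimal global cut, so the smallest rooted cut found across samples is the global minimum. As in \refcorollary{apx-vc}, $\evof{n - \outdegree{r}}$ over the weighted sample is at most $n(1 - \vc/W)$, so the $(n - \outdegree{r}) m^{1-2\delta/3}/\eps^{4/3}$ term amortizes over the $L$ samples to $\apxO{n m^{1-2\delta/3}/\eps^{4/3}}$.

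The main (mild) obstacle is the additive $\apxO{m}$ per-root preprocessing in the rooted algorithm, which naively contributes $\apxO{m L}$ and could exceed the target bound. I would resolve this exactly as in \refcorollary{apx-vc}: the per-root $\apxO{m}$ arises from \reflemma{apx-rvc-sparsification} used inside \reflemma{apx-rvc-small-sink}, and the only $r$-dependent step there is \refstep{rvc-extra-vertices}, which introduces auxiliary vertices adjacent to $r$. Deferring this step and performing all other sparsification preprocessing once globally reduces the total sparsification cost across all $L$ sampled roots from $\apxO{mL}$ to $\apxO{m + n L}$, and the latter is dominated by the other terms. The final bound is $\apxO{m + n m^{1 - 2\delta/3}/\eps^{4/3}}$ as claimed.
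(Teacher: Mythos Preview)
Your proposal is correct and matches the paper's proof essentially line for line: modify \reflemma{apx-rvc-big-sink} by dropping sparsification and using the $\apxO{m^{3/2-\delta}}$ vertex-cut subroutine, balance against \reflemma{apx-rvc-small-sink} at $k = \eps^{4/3} m^{1/2-\delta/3}$, and derive the global bound via the root-sampling argument of \refcorollary{apx-vc} (including the deferred-sparsification trick for the additive $\apxO{m}$ term). Your write-up is in fact more explicit than the paper's sketch on the global case.
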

\begin{proof}[Proof sketch]
  We first consider rooted vertex cuts. We take the same approach as
  \reftheorem{apx-rvc} except modifying the algorithm
  \reflemma{apx-rvc-big-sink} as follows. First, we do not sparsify
  the graph. Second, we use the $\apxO{m^{3/2 - \delta}}$ time
  algorithm for $(s,t)$-vertex cut instead of $\apxO{m +
    n^{3/2}}$. The result replaces the running time in
  \reflemma{apx-rvc-big-sink} with
  $\apxO{(n - \outdegree{r}) m^{3/2 - \delta} / k}$. Choosing $k$ to
  balance this running time with the
  $\apxO{(n - \outdegree{r}) k^2 \log{n} / \eps^4}$ running time of
  \reflemma{apx-rvc-small-sink} gives the claimed running time.

  The running time for global vertex connectivity follows from rooted
  connectivity in the same way as \refcorollary{apx-vc} above.
\end{proof}
Balancing \reflemma{sparse-apx-vc} with \reftheorem{apx-rvc} and
\refcorollary{apx-vc} gives $\apxO{(mn)^{1-c} / \eps^2}$ running times
for approximating rooted and global vertex cuts, where $c> 0$ is a
constant, as claimed in \refcorollary{flow-barrier-apx-vc}.



\paragraph{Acknowledgements.} We thank Chandra Chekuri for helpful
discussion and detailed feedback.


\printbibliography

\end{document}